\documentclass[12pt]{article}
\usepackage[margin=1in]{geometry}
\setlength\parindent{0pt}

\usepackage{natbib}
\usepackage{macros}
\usepackage{amsmath, amssymb, graphicx, amsthm}
\usepackage{tikz-cd}
\tikzset{>=stealth}
\usepackage{pgfplots}
\usepackage{float}
\usepackage{hyperref}     
\usepackage{url}            
\usepackage{booktabs}       
\usepackage{microtype}      
\usepackage{xcolor}
\usepackage{caption}
\usepackage{subcaption}
\usepackage{wrapfig}

\newtheorem{thm}{Theorem}
\newtheorem{prop}{Proposition}

\theoremstyle{definition}
\newtheorem{assump}{Assumption}

\theoremstyle{remark}
\newtheorem*{remark}{Remark}

\DeclareMathOperator{\argmin}{argmin}
\DeclareMathOperator{\argmax}{argmax}

\title{Distance-to-Set Priors and Constrained Bayesian Inference}
\author{Rick Presman\\ Department of Statistical Science, Duke University \and Jason Xu \\ Department of Statistical Science, Duke University}
\date{\today}

\begin{document}
\maketitle

\begin{abstract}
    Constrained learning is prevalent in many statistical tasks. Recent work proposes distance-to-set penalties to derive estimators under general constraints that can be specified as sets, but focuses on obtaining point estimates that do not come with corresponding measures of uncertainty. To remedy this, we approach distance-to-set regularization from a Bayesian lens. We consider a class of smooth \textit{distance-to-set priors}, showing that they yield well-defined posteriors toward quantifying uncertainty for constrained learning problems. We discuss relationships and advantages over prior work on Bayesian constraint relaxation. Moreover, we prove that our approach is optimal in an information geometric-sense for finite penalty parameters $\rho$, and enjoys favorable statistical properties when $\rho\to\infty$. The method is designed to perform effectively within gradient-based MCMC samplers, as illustrated on a suite of simulated and real data applications.
\end{abstract}
\newpage
\section{Introduction}

Constrained learning is ubiquitous in statistical tasks when seeking to impose desired structure on solutions. Concretely, consider the task of estimating a parameter $\bx\in\bbR^d$ by minimizing some loss function $f(\bx)$ where $\bx$ needs to satisfy a set of constraints encoded by a set $\calC$. Then we seek:
\begin{equation}\label{eq:constr}
    \min_{\bx}\;  f(\bx)  \quad  \st \; \bx\in\calC
\end{equation}
A simple but powerful observation that will make this amenable to effective algorithms is to equivalently express the restriction in terms of the Euclidean distance between the point $\bx$ and the constraint set as $\dist(\bx,\calC) = 0$.
In many instances, it is enough to only approximately satisfy the constraints. A recent framework that accomplishes this kind of constraint relaxation is known as the distance-to-set penalization \citep{chi2014distance,xu2017generalized}: for $\rho\in(0,\infty)$,
$$
\bx^*\in\argmin_\bx\left[f(x) + \frac{\rho}{2}\dist(\bx,\calC)^2 \right].
$$
Solutions to this problem can be obtained using a majorization-minimization (MM) scheme  known as the proximal distance algorithm \citep{keys2019proximal}, and it is so called because the iterative updates are defined via proximal operators \citep{parikh2014proximal}. However, despite its ability to deliver point estimates effectively, it is very difficult to derive measures of uncertainty, and so a general theory of inference is difficult to obtain. Toward filling this methodological gap, we recast the optimization problem in a constrained Bayesian setting by an analog of these penalties that we term \textit{distance-to-set priors}.\\

Our approach draws previously unexplored connections between this optimization framework and the broader constrained Bayesian inference literature \citep{ghosh1992constrained,gramacy2016modeling}, an area that continues to grow with exciting recent ideas. We focus on a tradition of sampling through gradient-based samplers such as Hamiltonian Monte Carlo, or HMC \citep{neal2011mcmc, betancourt2015hamiltonian}. 
\citet{lan2014spherical} utilize a spherical HMC, mapping constraints that can be written as norms onto the hypersphere. Related recent work uses a Riemannian HMC under a manifold setup \citep{kook2022sampling}, extending a line of work pioneered by \citet{byrne2013geodesic}. \cite{duan2020bayesian} replace a support constraint with a term that decays exponentially outside of the support, while \cite{sen2018constrained} project sample draws from unconstrained posteriors to the constraint set to approximate the original posterior.  Recently, \cite{xu2021bayesian} propose using priors based on proximal mappings related to the constraint sets.
Concurrent work in \citet{zhou2022proximal} propose using a the Moreau-Yosida envelope more generally, using a class of epigraph priors toward regularized and constrained problems suited for proximal MCMC \citep{pereyra2016proximal}.\\

Distance-to-set priors extend this line of inquiry, providing an effective, practical way to consider constrained inference problems. The framework is more general than many of the previous methods in that it essentially only requires that the constraint can be written as a set, and that projection onto that set is feasible. These priors are then easy to evaluate, and work well within gradient-based samplers due to our smooth formulation. This improves computational stability under posterior sampling algorithms such as HMC, as we investigate in an empirical study.
Moreover from a theoretical perspective, this class of priors admits posteriors that converge in distribution to the original constrained problem along with their maximum a posteriori (MAP) estimates as we increase the parameter governing the degree of constraint enforcement. Finally, we draw a connection between Bayesian constraint relaxation and information geometry, revealing how distance-to-set priors are optimal in a certain sense, while simultaneously yielding a way to select the regularization parameter $\rho$ systematically.

\section{Theory and Methods}

We begin by briefly reviewing distance-to-set penalties and some of their key properties.

\paragraph{Distance-to-Set Penalties}

Let $\calC\subset\bbR^n$ be convex, and let $f:\calC\to \bbR$ be a convex function. Many constrained programming problems of the form \eqref{eq:constr}
may be intractable in their original form, but can be converted to a sequence of simpler subproblems. To make progress,  denote the Euclidean distance from any $\bx\in\bbR^n$ to $\calC$ by $\dist(\bx,\calC) := \inf_{\by} \|\bx-\by\|_2$: then the condition $\bx\in \calC$ can be equivalently written as $\dist(x,\calC) = 0$. Note that while the distance operator is not necessarily smooth, its square is differentiable as long as the projection of $\bx$  onto $\calC$, denoted $P_{\calC}(\bx) := \arg\min_{\by\in\calC} \|\bx-\by\|_2$, is single-valued (\cite{lange2016mm}).
Thus, we may reformulate the problem by instead considering the \textit{smooth} unconstrained optimization task:
\begin{align*}
\min_{\bx}\; & \left[f(\bx) + \frac{\rho}{2}\dist(\bx,\calC)^2\right],
\end{align*}
where $\rho>0$ is a penalty parameter.
To solve the resulting problem,  \cite{lange2016mm} propose a method termed the proximal distance algorithm which makes use of the MM principle to create surrogate functions based on distance majorization \citep{chi2014distance}.
Its namesake derives from the fact that the minimization of the surrogate functions
$$g_\rho(\bx \mid \bx_k )= f(\bx) + \frac{\rho}{2} \lVert \bx - P_\mathcal{C}(\bx_k) \rVert^2$$
is related to the proximal operator of $f$ (\cite{parikh2014proximal}): recall for a function $f$, the proximal mapping with parameter $\lambda$ is defined
$$\text{prox}_{\lambda f}(\by) \equiv \underset{\bx}\argmin \,\Big[ f(\bx) + \frac{1}{2\lambda} \lVert \bx - \by \rVert_2^2\Big],$$ which relates to our problem with $\by$ the projection at iterate $k$ and 
$\lambda=\rho^{-1}$. 
Under this formulation, to recover the solution to the original optimization problem, it is necessary for $\rho\to\infty$ at some appropriate rate (\citet{wright1999numerical}). Conversely, fixing a finite $\rho$ results in a solution where $x$ is close to $\calC$, but not strictly inside of the set. Both cases may be of interest depending on the modeling context. We will discuss primarily the latter in this paper but also establish theoretical relationships to the former.  As our primary setting is statistical, we may think of $f(\bx)$ as a convex loss function.

\subsection{Distance-to-Set Priors}

The proximal distance algorithm mentioned above provides a method for obtaining point estimates under distance-to-set penalization. However, to the best of our knowledge, the current literature does not provide results pertaining to uncertainty quantification for these estimators. Toward understanding their uncertainty properties, our first contribution is to link these ideas to a Bayesian constraint relaxation framework. Identifying a penalized estimation problem with a Bayesian problem has been done at least as early as the seminal LASSO paper \citep{tibshirani1996regression}. Consider data $\by\mid\btheta\in\bbR^n$ that has likelihood $L(\btheta\mid\by)$ and is parameterized by some parameter $\btheta$ with prior $\pi(\btheta)$ that is absolutely continuous with respect to Lebesgue measure, with support $\bbR^d$ but constrained to $\bTheta\subset\bbR^d$. Since $\btheta$ is constrained to $\bTheta$, Bayes' Theorem gives the posterior for $\btheta$:
\begin{align*}
\overline{\pi}(\btheta\mid \by) & \propto L(\btheta\mid\by)\pi(\btheta)\mathbf{1}_{\btheta\in\bTheta} \\
& \propto L(\btheta\mid\by)\pi(\btheta)\mathbf{1}_{\dist(\btheta,\bTheta)=0}
\end{align*}
where the second line follows from the discussion of distance-to-set penalties. Since sampling from a posterior sharply constrained on $\bTheta$ may be difficult, we can replace the indicator representing the constraint with $\exp\left(-\frac{\rho}{2}\dist(\btheta,\bTheta)^2 \right)$. An illustration is given in Figure \ref{fig:prior_graph}. This term is equal to the indicator on $\bTheta$ and rapidly decays to zero as the distance from $\btheta$ to $\bTheta$ grows larger. We choose to square the distance-to-set operator to align with the distance-to-set optimization and to improve sampling performance, which we discuss in Section \ref{comp_imp}.\\

\begin{figure}[h]
\begin{center} \vspace{-5pt}
    \begin{tikzpicture}[scale=.98]
    	\begin{axis}[xmin=-3, xmax=3, ymin=-0.5, ymax=1.5, samples=50, axis lines=middle]
    		\addplot[red, very thick, domain=-3:-1]{e^(-(x+1)^2/2)};
    		\addplot[red, very thick, domain=1:3]{e^(-(x-1)^2/2)};
    		\addplot[red, very thick, domain=-1:1](x,1);
    		\addplot[blue, very thick, dashed, domain=-3:-1](x,0);
    		\addplot[blue, very thick, dashed, domain=0:1](-1,x);
    		\addplot[blue, very thick, dashed, domain=-1:1](x,1);
    		\addplot[blue, very thick, dashed, domain=0:1](1,x);
    		\addplot[blue, very thick, dashed, domain=1:3](x,0); 
    		\end{axis}
    \end{tikzpicture} \vspace{-10pt}
\end{center}
\caption{Schematic of distance-to-set prior for the set $\bTheta = [-1,1]$. The blue dashed line represents the indicator $\1_{\bTheta}$, and the red solid line represents the relaxation of $\1_{\bTheta}$ that we consider in this paper. } 
\label{fig:prior_graph}
\end{figure}
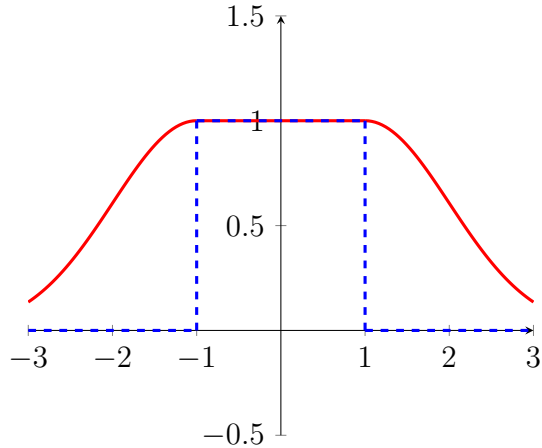

We propose to use \textit{distance-to-set priors}, defined as follows:
$$
\widetilde{\pi}(\btheta) := \pi(\btheta)\exp\left(-\frac{\rho}{2}\dist(\btheta,\bTheta)^2 \right),
$$
where $\rho>0$ is a hyperparameter in our treatment. To bridge this with the optimization setting, we can define $f(\btheta) := - \log (L(\btheta\mid\by)\pi(\btheta))$. However, it should be noted that although every likelihood function gives us a loss by taking the negative-log, the converse is not always true. Thus, one could generalize our approach by considering more general loss functions and incorporating them into the Bayesian framework via Gibbs posteriors \citep{bissiri2016general}; see also \cite{jacob2017better}.
Ignoring the constraint for a moment, we obtained the \textit{unconstrained posterior}:
\begin{equation}\label{eq:unconstrained}
\pi(\btheta\mid\by) \propto L(\btheta\mid\by)\pi(\btheta).
\end{equation}
Combining this with constraint relaxation, we obtained what we call the \textit{constraint relaxed posterior}:
\begin{equation}\label{eq:relaxed}
\widetilde{\pi}(\btheta\mid\by) \propto L(\btheta\mid\by)\pi(\btheta) \exp\left(-\frac{\rho}{2}\dist(\btheta,\bTheta)^2 \right)
\end{equation}

The relaxed form avoids the discontinuity implied by the indicator function in the sharply \textit{constrained posterior} :
\begin{equation}\label{eq:constrained}
\overline{\pi}(\btheta\mid\by) \propto L(\btheta\mid\by)\pi(\btheta)\1_{\btheta\in\bTheta}
\end{equation}
For the remainder of this paper, we make the following assumptions:

\begin{assump}\label{assump1}
    All probability measures are absolutely continuous with respect to $d$-dimensional Lebesgue measure with densities supported in $\bbR^d$.
\end{assump}

\begin{assump}\label{assump2}
    The unconstrained posterior $\pi(\btheta\mid\by)$ is proper; that is, $\displaystyle
    \int_{\bbR^d} L(\btheta\mid\by)\pi(\btheta)\,d\btheta < \infty.
    $
\end{assump}

\begin{assump}\label{assump3}
    Unless stated otherwise, the support $\varnothing\neq\bTheta\subset\bbR^d$ is a closed and convex set.
\end{assump}

We make Assumption \ref{assump1} because we will have an interest in the performance of samplers, like HMC, that are designed to perform on continuous distributions and to simplify the setting. Assumption \ref{assump2} guarantees the original posterior is not ill-posed, and ensures we are sampling from a well-defined distribution.
Assumption \ref{assump3} plays an integral role toward the smoothness properties of the constraint relaxation; they lead to continuity of the projection as well as a unique gradient of the squared distance.\\

These natural conditions asure that the object of interest is well-defined. The following proposition, as well as all theorems in the following section, are proven in the Appendix.

\begin{prop}\label{proper}
Under Assumptions \ref{assump1} and \ref{assump2}, the constraint relaxed posterior $\widetilde{\pi}(\btheta\mid\by)$ (Equation \ref{eq:relaxed}) is a proper density.
\end{prop}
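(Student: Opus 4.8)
The plan is to show that the unnormalized density in Equation~\eqref{eq:relaxed} is nonnegative, measurable, and has a finite and strictly positive integral over $\bbR^d$; properness then follows immediately upon normalizing. The work is light, so I would present it as a short chain of observations rather than a computation.

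First I would record measurability: the map $\btheta \mapsto \dist(\btheta,\bTheta)$ is $1$-Lipschitz (for a nonempty $\bTheta$, which is guaranteed here), hence continuous and Borel measurable, so $\btheta\mapsto \exp(-\tfrac{\rho}{2}\dist(\btheta,\bTheta)^2)$ is measurable and bounded, and its product with the measurable nonnegative function $L(\btheta\mid\by)\pi(\btheta)$ is measurable and nonnegative. Next, the key inequality: since $\rho>0$ and $\dist(\btheta,\bTheta)^2\ge 0$, we have the pointwise bound $0<\exp(-\tfrac{\rho}{2}\dist(\btheta,\bTheta)^2)\le 1$ for every $\btheta\in\bbR^d$. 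Integrating the upper bound against $L(\btheta\mid\by)\pi(\btheta)$ and invoking Assumption~\ref{assump2} gives
\[
0 \;\le\; \int_{\bbR^d} L(\btheta\mid\by)\pi(\btheta)\exp\!\left(-\tfrac{\rho}{2}\dist(\btheta,\bTheta)^2\right)d\btheta \;\le\; \int_{\bbR^d} L(\btheta\mid\by)\pi(\btheta)\,d\btheta \;<\;\infty,
\]
so the normalizing constant is finite.

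It remains to check that this constant is strictly positive, so that division by it is legitimate and the result is a bona fide probability density. Because the unconstrained posterior is proper and well-defined (Assumption~\ref{assump2}, together with Assumption~\ref{assump1} so that $L\pi$ is a genuine Lebesgue density), $L(\btheta\mid\by)\pi(\btheta)$ is not Lebesgue-almost-everywhere zero; and since $\bTheta\neq\varnothing$ the distance $\dist(\btheta,\bTheta)$ is finite for every $\btheta$, making the exponential factor strictly positive everywhere. Hence the integrand is strictly positive on a set of positive Lebesgue measure, and the normalizing constant is strictly positive. Dividing by it yields a nonnegative density integrating to one, i.e.\ $\widetilde{\pi}(\btheta\mid\by)$ is proper.

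I do not anticipate a genuine obstacle here; the only point requiring a moment's care is the strict positivity of the normalizing constant, which is why I would be explicit that Assumption~\ref{assump2} supplies not just finiteness but the existence of a legitimate unconstrained posterior (so $L\pi$ is not null), and that the nonemptiness of $\bTheta$ in Assumption~\ref{assump3} makes the relaxation factor everywhere positive rather than merely nonnegative. Everything else is the elementary monotonicity bound $\exp(-\tfrac{\rho}{2}\dist(\btheta,\bTheta)^2)\le 1$.
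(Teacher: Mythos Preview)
Your argument is correct and follows the same route as the paper: the core step in both is the pointwise bound $\exp(-\tfrac{\rho}{2}\dist(\btheta,\bTheta)^2)\le 1$ combined with Assumption~\ref{assump2} to obtain finiteness of the normalizing constant. You go further than the paper by explicitly verifying measurability and strict positivity of the normalizing constant (the paper's proof establishes only finiteness), which is a welcome addition of rigor rather than a different method.
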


\subsection{Statistical Properties}

Distance-to-set regularization and the underlying constrained problem are inextricably link, so one would naturally hope that the constraint relaxed posterior behaves approximately like the constrained posterior when $\rho$ is large. Fortunately, this is true as we formalize in the guarantees below. Our first result shows that our class of distance-to-set priors also posses the desirable property that the sequence of MAP estimators of the relaxed posterior (indexed by the penalty parameter $\rho$) converge to the the MAP estimator of the non-relaxed problem as $\rho$ grows large when the posterior is log-concave.

\begin{thm}
Suppose the unconstrained posterior $\pi(\btheta\mid \by)$ (Equation \ref{eq:unconstrained}) is strictly log-concave. Let $\{\widetilde{\pi}_{\rho_k}(\btheta\mid\by)\}_{k\in\bbN}$ (Equation \ref{eq:relaxed}) be a sequence of constraint-relaxed posterior distributions where $\rho_k\uparrow\infty$ as $k\to\infty$. Further, define the following MAP estimators
\[ \widehat{\btheta}^M = \argmax_{\btheta} \overline{\pi}(\btheta\mid\by),
   \quad  \widehat{\btheta}_{\rho_k}^M = \argmax_{\btheta} \widetilde{\pi}_{\rho_k}(\btheta\mid\by).\]
Then the sequence $\widehat{\btheta}_{\rho_k}^M \to \widehat{\btheta}^M$ as $k\to\infty$.
\end{thm}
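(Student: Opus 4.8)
The plan is to read the relaxed MAP problem as a classical \emph{exterior} (quadratic) penalty method for the constrained MAP problem and to run the standard penalty-method convergence argument (cf.\ \citet{wright1999numerical}); the preliminary---and genuinely technical---step is to verify that all the $\argmax$'s in the statement actually exist. Set $f(\btheta) := -\log\bigl(L(\btheta\mid\by)\pi(\btheta)\bigr)$ and $f_{\rho}(\btheta) := f(\btheta) + \frac{\rho}{2}\dist(\btheta,\bTheta)^2$, so that (dropping normalizing constants, which do not affect modes) $\widehat{\btheta}^M = \argmin_{\btheta\in\bTheta} f(\btheta)$ and $\widehat{\btheta}^M_{\rho_k} = \argmin_{\btheta} f_{\rho_k}(\btheta)$. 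Strict log-concavity of the unconstrained posterior means $f$ is finite (full support, Assumption \ref{assump1}) and strictly convex, hence continuous on $\bbR^d$; since $\dist(\cdot,\bTheta)^2$ is convex (the distance to the convex set $\bTheta$ is convex and nonnegative, composed with the increasing convex map $t\mapsto t^2$) and nonnegative, each $f_{\rho_k}$ is strictly convex and continuous, so any minimizer of $f$ over $\bTheta$ or of $f_{\rho_k}$ over $\bbR^d$ is necessarily unique once existence is established.

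For existence I would first show $f$ is coercive, i.e.\ has bounded sublevel sets. This is where Assumption \ref{assump2} is used: if $\{f\le c\}$ were unbounded for some $c$, this closed convex set would contain a half-line, along which the convex function $f$ is bounded above and hence non-increasing, so $e^{-f}$ is bounded below by a positive constant on the half-line; log-concavity then propagates this lower bound to the convex hull of the half-line with a small ball around an interior point, a region of infinite Lebesgue measure, contradicting $\int e^{-f}<\infty$. (Equivalently: a proper log-concave density decays at least exponentially in every direction.) Coercivity and continuity make every sublevel set $\{f\le c\}$ compact, so $f$ attains its infimum $m:=\inf_\btheta f(\btheta)>-\infty$; intersecting with the closed set $\bTheta$ (Assumption \ref{assump3}) gives existence of $\widehat{\btheta}^M$, and since $\{f_{\rho_k}\le c\}\subseteq\{f\le c\}$ is compact, existence of each $\widehat{\btheta}^M_{\rho_k}$ as well.

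With well-posedness in hand the convergence is the usual chain of inequalities. Because $\widehat{\btheta}^M\in\bTheta$ the penalty vanishes there, so optimality of $\btheta_k:=\widehat{\btheta}^M_{\rho_k}$ yields
\[
f(\btheta_k) + \tfrac{\rho_k}{2}\dist(\btheta_k,\bTheta)^2 \;\le\; f_{\rho_k}\bigl(\widehat{\btheta}^M\bigr) \;=\; f\bigl(\widehat{\btheta}^M\bigr).
\]
Since the penalty is nonnegative, $f(\btheta_k)\le f(\widehat{\btheta}^M)$ for all $k$, so $\{\btheta_k\}$ lies in the compact sublevel set $\{f\le f(\widehat{\btheta}^M)\}$ and is bounded; the same inequality also gives $\dist(\btheta_k,\bTheta)^2\le \tfrac{2}{\rho_k}\bigl(f(\widehat{\btheta}^M)-m\bigr)\to 0$. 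Let $\btheta_{k_j}\to\bar\btheta$ be any subsequential limit. Then $\dist(\bar\btheta,\bTheta)=0$ by continuity of $\dist(\cdot,\bTheta)$, so $\bar\btheta\in\bTheta$ by closedness, while lower semicontinuity of $f$ together with $f(\btheta_{k_j})\le f(\widehat{\btheta}^M)$ gives $f(\bar\btheta)\le f(\widehat{\btheta}^M)$; uniqueness of the constrained minimizer forces $\bar\btheta=\widehat{\btheta}^M$. Thus every convergent subsequence of the bounded sequence $\{\btheta_k\}$ converges to the same point, so $\btheta_k\to\widehat{\btheta}^M$, which is the claim.

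The step I expect to be the main obstacle is precisely the coercivity/existence argument above: turning ``proper'' plus ``log-concave'' into ``$f$ has compact sublevel sets,'' and checking that adding the quadratic distance penalty preserves this. Everything after that is the textbook exterior-penalty proof, and strict log-concavity enters only at the end, to guarantee the limits are unique so that subsequential convergence upgrades to convergence of the full sequence. (If one is content to simply assume each MAP is well-defined---e.g.\ that $f$ is coercive---the whole argument collapses to the displayed inequality plus the subsequence step.)
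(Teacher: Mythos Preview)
Your proof is correct and follows essentially the same route as the paper: both read the relaxed MAP as a quadratic exterior penalty for the constrained MAP and run the textbook argument from \citet[Theorem~17.1]{wright1999numerical}, using the basic inequality $f(\btheta_k)+\tfrac{\rho_k}{2}\dist(\btheta_k,\bTheta)^2\le f(\widehat{\btheta}^M)$ to force feasibility of any limit point and then invoking strict log-concavity plus convexity of $\bTheta$ for uniqueness. The main difference is that you supply the coercivity/existence step (deriving compact sublevel sets of $f$ from properness and log-concavity) and extract $f(\btheta_k)\le f(\widehat{\btheta}^M)$ directly to get boundedness, whereas the paper simply posits a limit point and reaches $\log\pi(\btheta^*\mid\by)\ge\log\pi(\widehat{\btheta}\mid\by)$ through a slightly longer pair of limiting inequalities; your version is tighter on well-posedness, theirs takes these foundational issues for granted.
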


In addition to convergence of a point estimate, we can say more about the behavior of the entire distribution. 

\begin{thm}
Let $\overline{\Pi}$ be the constrained posterior distribution with density $\overline{\pi}(\btheta\mid\by)$, and let $\{\widetilde{\Pi}_{\rho_k}\}_{k\in\bbN}$ be a sequence of constraint-relaxed posterior distributions with densities $\{\widetilde{\pi}_{\rho_k}(\btheta\mid\by)\}_{k\in\bbN}$, respectively, where $\rho_k\uparrow\infty$ as $k\to\infty$. Then  $\|\widetilde{\Pi}_{\rho_k} - \overline{\Pi}\|_{\mathsf{TV}} \to 0$ as $k\to \infty$. It follows that $\widetilde{\Pi}_{\rho_k}\overset{D}{\rightarrow}\overline{\Pi}$ as $k\to 0$.
\end{thm}

Theorem 2 is consistent with concurrent work by \cite{zhou2022proximal} showing convergence in total variation distance for posterior distributions under general epigraph priors.

\paragraph{Information Projection}
The preceding results primarily concern the limiting setting where $\rho$ grows large, confirming that the relaxed posteriors under our priors tend to the sharply constrained posterior. However, a common modeling application in practice entails selecting a finite value $\rho<\infty$ to promote structure encoded in the constraint $\mathcal{C}$. The next contribution highlights a deeper connection between constrained and constraint-relaxed posterior distributions from an information geometric perspective.\\

Consider the special case of the moment-constrained information projection problem, originally studied by \cite{csiszar1975divergence}:
\begin{align}\label{eq:inf}
    \min_{p(\btheta)}\;& \int p(\btheta)\log\left(\frac{p(\btheta)}{\pi(\btheta\mid\by)}\right)\,d\btheta \\
    \st\;& \bbE_{\btheta\sim p}\left[\frac{1}{2}\dist(\btheta,\bTheta)^2 \right] = D\nonumber
\end{align}
Thus we are interested in finding the closest density $p(\btheta)$ to the unconstrained posterior $\pi(\btheta\mid\by)$ in terms of KL divergence such that the expected square distance of $\btheta$ to $\bTheta$ under $p(\btheta)$ is equal to some given value $D$.

\begin{thm}\label{thm:lagrange}
Suppose that $\bbE_{\btheta\sim\pi(\btheta\mid\by)}[\dist(\btheta,\bTheta)^2/2] > D$. Then the constraint-relaxed posterior distribution $\widetilde{\pi}(\btheta\mid\by)$ (Equation \ref{eq:relaxed}) is the solution to the moment-constrained information projection problem (Equation \ref{eq:inf}):
$$
p^*(\btheta) \propto \pi(\btheta\mid\by)\exp\left(-\frac{\lambda}{2}\dist(\btheta,\bTheta)^2 \right),
$$
where $\lambda>0$ is a Lagrange multiplier that satisfies the moment constraint under $p^*(\btheta)$.
\end{thm}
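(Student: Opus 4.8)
\textbf{Proof sketch for Theorem~\ref{thm:lagrange}.}

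The plan is to treat \eqref{eq:inf} as a convex program over densities and solve it with the Gibbs variational (Donsker--Varadhan) identity, which simultaneously produces the exponential-tilt form and pins down the multiplier. Write $g(\btheta):=\tfrac12\dist(\btheta,\bTheta)^2\ge 0$, and for $\lambda\ge 0$ set $Z_\lambda:=\int\pi(\btheta\mid\by)e^{-\lambda g(\btheta)}\,d\btheta$ and $p_\lambda:=Z_\lambda^{-1}\,\pi(\btheta\mid\by)e^{-\lambda g}$; note $Z_\lambda\in(0,\infty)$ because $0<e^{-\lambda g}\le 1$ and $\pi(\cdot\mid\by)$ is proper (Assumption~\ref{assump2}), and $p_\lambda$ is exactly the constraint-relaxed posterior \eqref{eq:relaxed} with $\rho=\lambda$. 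Expanding $\log p_\lambda$ gives, for every density $p$ with $\mathbb E_p[g]<\infty$,
\[
\mathrm{KL}\!\left(p\,\|\,\pi(\cdot\mid\by)\right)=\mathrm{KL}\!\left(p\,\|\,p_\lambda\right)-\lambda\,\mathbb E_p[g]-\log Z_\lambda .
\]
Restricting to $p$ feasible for \eqref{eq:inf}, i.e.\ $\mathbb E_p[g]=D$, the last two terms are a constant, so minimizing the left-hand side over feasible $p$ is equivalent to minimizing $\mathrm{KL}(p\|p_\lambda)$ over feasible $p$; the latter is $\ge 0$ with equality iff $p=p_\lambda$ almost everywhere. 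Hence, \emph{provided} $\lambda$ can be chosen so that $p_\lambda$ itself is feasible, $p^\ast=p_\lambda$ is the unique solution and has the advertised form (densities with $\mathrm{KL}(p\|\pi(\cdot\mid\by))=\infty$ are never optimal since $\mathrm{KL}(p_\lambda\|\pi(\cdot\mid\by))=-\lambda D-\log Z_\lambda<\infty$).

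It remains to find $\lambda^\ast>0$ with $\mathbb E_{p_{\lambda^\ast}}[g]=D$. Let $m(\lambda):=\mathbb E_{p_\lambda}[g]$. I will show $m$ is continuous, non-increasing, with $m(0^{+})=\mathbb E_{\pi(\cdot\mid\by)}[g]>D$ (the hypothesis) and $m(\lambda)\downarrow 0$ as $\lambda\to\infty$; the intermediate value theorem then yields $\lambda^\ast\in(0,\infty)$ with $m(\lambda^\ast)=D$, unique because $m$ is in fact strictly decreasing. Monotonicity is the exponential-family identity $m'(\lambda)=-\mathrm{Var}_{p_\lambda}(g)\le 0$, obtained by differentiating $Z_\lambda$ twice under the integral sign (justified near any $\lambda_0>0$ by the domination $g^k e^{-\lambda g}\pi(\cdot\mid\by)\le C_{k,\lambda_0}\,\pi(\cdot\mid\by)\in L^1$), and continuity follows the same way. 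For the limit, fix $\delta>0$ and bound $m(\lambda)\le\delta+\mathbb E_{p_\lambda}[g\,\mathbf 1\{g>\delta\}]$; on $\{g>\delta\}$ one has $g e^{-\lambda g}\le\delta e^{-\lambda\delta}$ once $\lambda>1/\delta$, while $Z_\lambda\ge e^{-\lambda\delta/2}\int_{\{g<\delta/2\}}\pi(\cdot\mid\by)\,d\btheta$ with $\int_{\{g<\delta/2\}}\pi(\cdot\mid\by)\,d\btheta>0$ since $\{g<\delta/2\}$ is an open neighborhood of the nonempty set $\bTheta$ and $\pi(\cdot\mid\by)$ has full support (Assumptions~\ref{assump1},~\ref{assump3}); therefore $\mathbb E_{p_\lambda}[g\,\mathbf 1\{g>\delta\}]\le \delta e^{-\lambda\delta/2}/\!\int_{\{g<\delta/2\}}\pi(\cdot\mid\by)\,d\btheta\to 0$, so $\limsup_{\lambda}m(\lambda)\le\delta$ for every $\delta>0$.

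The main obstacle is precisely this existence-of-multiplier step, and within it the limit $m(\lambda)\to 0$: it is where the closed--convex and full-support assumptions do genuine work, ensuring the sublevel sets $\{g<\delta\}$ carry positive prior mass and that $g$ is non-degenerate under each $p_\lambda$ (hence $m$ strictly decreasing and $D$ actually attained at a finite $\lambda^\ast$). The remaining pieces---finiteness of $Z_{\lambda^\ast}$ and of $\mathrm{KL}(p_{\lambda^\ast}\|\pi(\cdot\mid\by))$, the differentiation-under-the-integral justifications, and discarding infinite-$\mathrm{KL}$ competitors---are routine. I would close with the remark that this variational argument reproduces exactly the first-order stationarity condition $\log\frac{p(\btheta)}{\pi(\btheta\mid\by)}+1+\mu+\lambda\, g(\btheta)=0$ that a formal Lagrangian for \eqref{eq:inf} produces, so $\lambda^\ast$ is literally the Lagrange multiplier named in the statement; since the objective is convex in $p$ and the constraints affine, this stationary point is the global optimum.
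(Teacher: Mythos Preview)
Your proof is correct, and it follows the same underlying idea as the paper---the solution to the moment-constrained KL projection is the exponentially tilted density---but the execution is genuinely different. The paper's proof is essentially a citation: it quotes the general exponential-tilting form of the minimizer from \cite{csiszar1975divergence} and \cite{west2020perspectives}, specializes the notation, and then invokes an inequality from \cite{tallman2022entropic} to get the sign of $\lambda$. In particular, the paper never explicitly argues that a finite multiplier $\lambda^\ast$ satisfying the moment constraint actually exists.

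You instead derive everything from first principles. The Donsker--Varadhan decomposition $\mathrm{KL}(p\|\pi)=\mathrm{KL}(p\|p_\lambda)-\lambda\,\mathbb E_p[g]-\log Z_\lambda$ both produces the tilted form and proves optimality in one stroke, with no appeal to outside results. Your existence argument for $\lambda^\ast$ is also more complete than the paper's: you establish $m'(\lambda)=-\mathrm{Var}_{p_\lambda}(g)<0$ directly, and you carefully justify the limit $m(\lambda)\to 0$ using the full-support and nonempty-constraint assumptions (Assumptions~\ref{assump1} and~\ref{assump3}), which is exactly where those hypotheses are needed but which the paper leaves implicit. The trade-off is length for self-containment: your argument needs no external references, while the paper's is a paragraph that defers the substance to the literature. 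One small point you might tighten is the continuity of $m$ at $\lambda=0$ (your domination bound is stated only near $\lambda_0>0$), but monotone convergence handles this immediately since $g\,e^{-\lambda g}\uparrow g$ as $\lambda\downarrow 0$.
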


The solution given in Theorem  \ref{thm:lagrange} is known as \textit{exponential tilting} \citep{west2020perspectives, tallman2022entropic}. Observe that for $\lambda=0$, $p^*(\btheta) = \pi(\btheta\mid\by)$. Moreover, $D$ and $\lambda$ are inversely related (see Appendix for additional details), so in particular, if $D\to0$, then $\lambda\to\infty$ and $p^*(\btheta) \to \overline{\pi}(\btheta\mid\by)$. Exponential tilting therefore creates a spectrum of constraint relaxation with the unconstrained posterior on one end, the constrained posterior on the other end, and the constraint-relaxed posterior as the optimal choice in the sense that it is the closest to $\pi(\btheta\mid\by)$ while maintaining a specified distance from $\bTheta$ in expectation.\\

This perspective has practical implications. A common challenge in regularization problems involves specifying the penalty parameter when it does not have an interpretable scale. Theorem \ref{thm:lagrange} provides a systematic solution by identifying the Lagrange multiplier $\lambda$ from the information projection with the penalty $\rho$. We can solve for $\lambda$ given a value for $D$, and then use that value as the corresponding value for $\rho$ in the distance-to-set regularization or the corresponding Bayesian constraint relaxation. Thought it appears we've simply swapped specifying $\rho$ with $D$, it's important to note that $D$ is often interpretable in practice as it is on the same scale as $\btheta$ interpretable scale, so we can choose the level of relaxation using real-world inputs in application.

\subsection{Prior work on Bayesian Constraint Relaxation}

The task our contributions address is closely related to the Bayesian constraint relaxation work by \citet{duan2020bayesian}. There, the authors also consider relaxing a sharply constrained prior by quantifying the distance to the desired constraint, with particular attention to the case when the constraint sets which they denote $D$ lie in a lower dimensional subspace of the full space $\mathbb{R}^d$. 
They construct posteriors of the form $\widetilde\pi_\lambda \propto \ell(\btheta; Y) \pi_R(\btheta) \text{exp} \{ -\lambda^{-1} \lVert \nu_D(\btheta) \rVert\}$, where $s<d$ denotes the dimension of the constraint set $D$, which is represented algebraically as a solution to the system of equations $\{ \nu_j(\btheta) = 0\}_{j=1}^s$. \citet{duan2020bayesian} choose to measure the constraint violation explicitly using the function $\lVert \nu_D(\btheta) \rVert = \sum_{j=1}^s | \nu_j(\btheta)|$.\\

The authors briefly comment that users may flexibly choose a measure of constraint violation: along this line, our method not only shows how the squared Euclidean distance is preferable in many ways over their choice of $\lVert \nu_D(\btheta) \rVert $, but makes a key departure from defining constraints algebraically and component-wise by grounding  in a \textit{projection-based} framework. That is, even when a constraint set $D$ has measure zero in  $\mathbb{R}^d$,  for any point $\bx \in \mathbb{R}^d$, its projection  $P_D(\bx) \in \mathbb{R}^d$ also lives in the  ambient space. By exploiting the projection-based characterization of the distance from points to sets, our formulation handles constraints \textit{implicitly}, yielding effective algorithms that stay in the original space. Not only does this avoid having to explicitly write constraints algebraically, but obviates technical geometric measure theoretic arguments by avoiding the need to operate directly in the subspace containing $D$ and resolve the mismatch in dimension when mapping back into $\mathbb{R}^d$.\\

Our work shares a connection with recent work that proposes a class of nondifferentiable priors called epigraph priors in the context of Bayesian trend filtering \cite{heng2022bayesian}. Though connections to proximal distance algorithms and distance majorization are not explicitly referenced by the authors, projection onto the epigraph of a regularization function $g$ depends on the proximal mapping of $g$, and the success of their framework hinges on the same algorithmic primitives and known projection operators or proximal maps that make computation attractive in our case. Indeed, the proximal map of an indicator function $1_{C}(x)$ of a set $C$ is given by the projection $P_C(x)$ onto $C$. From another perspective, the Moreau-Yosida envelope of $1_{C}(x)$ is given by the squared distance between $x$ and $C$. While neither of these discusses the Bayesian constraint framework of \cite{duan2020bayesian}, in concurrent work \cite{zhou2022proximal} also remark on the connection between distance to epigraph approaches and distance regularization from the optimization perspective.

\subsection{Sampling via Hamiltonian Monte Carlo}\label{comp_imp}

Having established its properties, we now discuss how to effectively draw samples from the posterior distribution in practice. We advocate Hamiltonian Monte Carlo (HMC) \citep{neal2011mcmc, betancourt2015hamiltonian}, a popular gradient-based MCMC algorithm that leverages Hamiltonian dynamics to generate effective parameter proposals.\\

We briefly review the HMC framework: to sample from a posterior $\pi(\btheta\mid\by) \propto L(\btheta\mid\by)\pi(\btheta)$, where the posterior has support on $\bbR^d$,
HMC begins by embedding $\btheta$ into $\bbR^{2d}$ via the introduction of an independent, auxiliary \textit{momentum} parameter $\bp\in\bbR^d$. The parameter of interest $\btheta$ plays the role of the \textit{position} vector; the sampler then explores their joint posterior: $\pi(\btheta,\bp\mid\by)$. Define the Hamiltonian function $H:\bbR^{2d}\to\bbR$ by $H(\btheta,\bp) := -\log\pi(\btheta,\bp)$. By the independence of $\btheta$ and $\bp$, we can write
$$
H(\btheta,\bp) = K(\bp) + U(\btheta),
$$
where one can take the kinetic energy to take the form $K(\bp) := \frac{1}{2}\bp^\intercal\bM^{-1}\bp + C$ for some constant $C$ and mass matrix $\bM$, and the potential energy $U(\btheta) := -\log\pi(\btheta\mid\by)$. The Hamiltonian dynamics that describe how the parameters evolve over ``time'' impose structure on the manifold containing $(\btheta,\bp)$:
$$
\begin{cases}
\frac{d\btheta}{dt} =  \nabla_{\btheta}H(\btheta,\bp)  = \nabla_{\btheta}\log\pi(\btheta\mid\by)\\
\frac{d\bp}{dt} = -\nabla_{\bp}H(\btheta,\bp) = -\bM^{-1}\bp
\end{cases}
$$
Generally, there is no analytical tractable solution for this PDE, so we rely on what is known as the leap-frog integrator to discretize the PDE as follows. Given some step size $\varepsilon$ and a number of steps $L$, we iterate  for $l=1,\ldots,L$:
\begin{enumerate}
    \item $\bp_{t+\varepsilon/2} = \bp_t - \left.\frac{\varepsilon}{2}\nabla_{\btheta}\log\pi(\btheta\mid\by)\right|_{\btheta = \btheta_t}$
    \item $\btheta_{t+\epsilon} = \btheta_t + \varepsilon \bM^{-1}\bp_{t+\varepsilon/2}$
    \item $\bp_{t+\varepsilon} = \bp_{t+\varepsilon/2} - \left.\frac{\varepsilon}{2}\nabla_{\btheta}\log\pi(\btheta\mid\by)\right|_{\btheta = \btheta_{t+\varepsilon}}$
\end{enumerate}
To incorporate this into a sampling algorithm, suppose we start with a current parameter draw $\btheta^{(s)}$. Draw $\bp^0\sim N_d(\0,\bM)$. Perform the leap-frog integrator to obtain a proposal $(\btheta^{(s+1)},\bp^*)$. After reversing the direction of momentum $-\bp^* \mapsto \bp^*$, we perform an accept-reject step to correct discretization error: accept $(\btheta^*,\bp^*)$ with probability
$$
\alpha = \min\left\{1, \frac{e^{-H(\btheta^{(s+1)},\bp^*)}}{e^{-H(\btheta^{(s)},\bp^0)}} \right\}.
$$

\paragraph{Computational Advantages} 

\cite{duan2020bayesian} report instability in the HMC algorithm, particularly when constraints are tightly enforced (i.e., $\rho$ is large) under their Bayesian constraint relaxation formulation. This section provides a simple explanation for this behavior by examining the gradients under each approach, and also reveals how our formulation avoids these by yielding continuously differentiable gradients. In doing so, we greatly improve stability in HMC implementations so that adequate mixing is not restricted to narrow parameter ranges.

\begin{prop}\label{smooth}
The log constraint-relaxed posterior $\log \widetilde{\pi}(\btheta\mid \by)$ (Equation \ref{eq:relaxed}) is continuously differentiable as long as the  log-posterior $\log\pi(\btheta\mid\by)$ (Equation \ref{eq:unconstrained}) is continuously differentiable in $\btheta$.
\end{prop}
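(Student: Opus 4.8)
The plan is to exploit the additive structure created by taking logarithms. Up to an additive constant (the log-normalizer, which does not depend on $\btheta$), Equation \ref{eq:relaxed} gives $\log\widetilde{\pi}(\btheta\mid\by) = \log\pi(\btheta\mid\by) - \tfrac{\rho}{2}\dist(\btheta,\bTheta)^2$, where $\log\pi(\btheta\mid\by)$ is the log of the unconstrained posterior of Equation \ref{eq:unconstrained}. Since differentiation is linear and a finite sum of continuously differentiable functions is continuously differentiable, it suffices to (i) invoke the hypothesis that $\log\pi(\btheta\mid\by)\in C^1$, and (ii) prove that $\btheta\mapsto\dist(\btheta,\bTheta)^2$ is continuously differentiable on $\bbR^d$. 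All the work is in (ii).

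For (ii), I would first invoke Assumption \ref{assump3}: because $\bTheta$ is nonempty, closed, and convex, the metric projection $P_{\bTheta}(\btheta) = \argmin_{\by\in\bTheta}\|\btheta-\by\|_2$ exists and is single-valued for every $\btheta\in\bbR^d$, and moreover $P_{\bTheta}$ is nonexpansive, $\|P_{\bTheta}(\btheta)-P_{\bTheta}(\btheta')\|_2\le\|\btheta-\btheta'\|_2$, hence continuous. The key identity to establish is the gradient formula $\nabla_{\btheta}\,\dist(\btheta,\bTheta)^2 = 2\bigl(\btheta - P_{\bTheta}(\btheta)\bigr)$. The cleanest route recognizes $\tfrac12\dist(\cdot,\bTheta)^2$ as the Moreau envelope (with parameter $1$) of the convex, proper, lower semicontinuous indicator $\iota_{\bTheta}$ (equal to $0$ on $\bTheta$ and $+\infty$ elsewhere); standard Moreau-envelope theory then yields that this envelope is $C^1$ with gradient $\btheta - \mathrm{prox}_{\iota_{\bTheta}}(\btheta) = \btheta - P_{\bTheta}(\btheta)$, since the proximal map of an indicator is the projection onto its set. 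Alternatively one can derive the identity by hand from the variational inequality characterizing the projection, $\langle\btheta - P_{\bTheta}(\btheta),\, \by - P_{\bTheta}(\btheta)\rangle \le 0$ for all $\by\in\bTheta$, via a first-order expansion. Either way, continuity of $\btheta\mapsto 2(\btheta - P_{\bTheta}(\btheta))$ is immediate from continuity of $P_{\bTheta}$, so $\dist(\cdot,\bTheta)^2\in C^1(\bbR^d)$.

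Assembling the pieces, $\log\widetilde{\pi}(\btheta\mid\by)$ is the sum of a $C^1$ function and $-\tfrac{\rho}{2}$ times a $C^1$ function, hence is itself $C^1$, with $\nabla_{\btheta}\log\widetilde{\pi}(\btheta\mid\by) = \nabla_{\btheta}\log\pi(\btheta\mid\by) - \rho\bigl(\btheta - P_{\bTheta}(\btheta)\bigr)$. The main obstacle — indeed essentially the only nontrivial point — is step (ii): justifying the gradient formula for the squared distance and the continuity of the projection operator. This is precisely where convexity in Assumption \ref{assump3} enters, since without it $P_{\bTheta}$ can be multivalued and $\dist(\cdot,\bTheta)^2$ can fail to be differentiable at points with several nearest neighbors in $\bTheta$; the write-up should flag this dependence explicitly rather than bury it.
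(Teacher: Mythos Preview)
Your proposal is correct and follows essentially the same approach as the paper: reduce to showing $\tfrac12\dist(\cdot,\bTheta)^2$ is $C^1$, invoke Assumption~\ref{assump3} to obtain a single-valued projection with gradient formula $\btheta - P_{\bTheta}(\btheta)$, and use nonexpansivity of $P_{\bTheta}$ for continuity. The only cosmetic differences are that the paper cites \cite{lange2016mm} directly for the gradient formula (rather than routing through the Moreau envelope) and additionally records a more general Chebyshev-set characterization of when $P_{\bTheta}$ is continuous.
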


The proof is detailed in the Appendix, but follows from continuity and uniqueness of the projection, which are given by convexity. In particular, we see that the gradient $$
\nabla_{\btheta}\left[\frac{1}{2}\dist(\btheta,\bTheta)^2 \right] = \btheta - P_{\bTheta}(\btheta)
$$
converges continuously to $0$ on the boundary of the constraint as desired.\\

To better understand  advantages over prior work, we examine how the gradient would behave had we relaxed the constraint without squaring a distance-to-set penalty, akin to an $\ell_1$ approach as in \citep{duan2020bayesian}. The log-posterior, denoted by $\widehat{\pi}(\theta)$ would be of the form:
$$
\log \widehat{\pi}(\btheta\mid \by) = \log L(\by\mid\btheta)\pi(\btheta) - \frac{\rho}{2}\dist(\btheta,\bTheta),
$$
which is not smooth in general. In particular, examining the subdifferential with respect to $\btheta$ yields
$$
\partial_{\btheta} \log\widehat{\pi}(\btheta) = \partial_{\btheta} \log L(\btheta\mid\by)\pi(\btheta) - \begin{cases}
\frac{\btheta - P_{\bTheta}(\btheta)}{\|\btheta - P_{\bTheta}(\btheta)\|_2}, & \btheta\not\in\bTheta \\
0, & \btheta\in\bTheta
\end{cases}
$$

Observe that the $\|\nabla_{\btheta} \dist(\btheta,\bTheta)\|_2 = 1$ for $\btheta\notin\bTheta$, and 0 otherwise: the distance fails to be continuously differentiable at the boundary, instead \textit{sharply transitioning} at a jump discontinuity. Computationally, this manifests as instability and poor mixing when the sampler is close to the constraint, as whenever $\btheta\approx P_{\bTheta}(\btheta)$, the denominator becomes numerically close to 0. This agrees with empirical findings reported in \citep{duan2020bayesian} and their remarks on instability in the Supplemental Materials.\\

\begin{remark}
We may weaken Assumption 3 so that $\bTheta$ is closed but not necessarily convex. In this case, Proposition 7 of \cite{keys2019proximal} assures that for a nonempty closed subset of $\bbR^n$, the projection operator is multi-valued on a set of measure zero, so the gradient formula for the squared distance function holds and is uniquely defined almost surely.
\end{remark}

\section{Results and Performance}

In this section, we investigate the performance of distance-to-set priors on increasingly more involved empirical studies. We find that our priors result in improved sampling performance relative to existing constraint relaxation methods.

\paragraph{Regression over the $\ell_2$-Ball}

We illustrate our approach to measuring the uncertainty of distance-to-set penalization by considering a simple constrained formulation of the ridge regression problem. Here the constraint set $\calC =  B_2(0,1)$ is the Euclidean unit $\ell_2$-ball. From an estimation perspective, the proximal distance algorithm aims to solve the problem
$$
\min_{\bbeta\in\calC} \;\|\by - \bX\bbeta\|_2^2,
$$
where $\by\in \bbR^n$, $\bX\in\bbR^{n\times p}$, and $\bbeta\in\bbR^p$, by  considering a relaxed version. For a fixed $\rho\in(0,\infty)$,
$$
\min_{\bbeta\in\bbR^p}\;\|\by - \bX\bbeta\|_2^2 + \frac{\rho}{2}\dist(\bbeta,\calC)^2
$$
Applying the iterations from the proximal distance algorithm without taking $\rho\uparrow\infty$ will solve this problem, the solution of which we denote by $\widehat{\bbeta}$. To obtain corresponding uncertainty as measured by a posterior, consider the Gaussian model: $\by\mid\bbeta \sim N(\bX\bbeta, \sigma^2\bI)$ with a flat prior $\pi(\bbeta)\propto 1$. Then the constraint relaxed posterior is given by

$$
\widetilde{\pi}(\bbeta\mid \by) \propto \exp\left(-\frac{1}{2\sigma^2}\|\by - \bX\bbeta\|_2^2 \right)\exp\left(-\frac{\rho}{2}\dist(\bbeta,\calC)^2 \right)
$$

Clearly, the MAP estimator $\widehat{\bbeta}_{\text{MAP}}$ is equal to $\widehat{\bbeta}$. Since we have a fully-specified posterior, we can supplement the estimator $\widehat{\bbeta}$ with uncertainty quantification. Moreover, a more subtle point is that we can use the proximal distance algorithm to compute MAP estimates for the corresponding Bayesian model, which would normally be very difficult to obtain simply from drawing samples from the posterior.\\

We examine the performance in this model on simulated data. In this case, there is a simple closed-form expression for the projection:
$$
P_{\calC}(\bbeta) = \begin{cases} \bbeta/\|\bbeta\|_2,& \bbeta\not\in\calC \\
0,& \bbeta\in\calC\end{cases}.
$$

\begin{figure}
    \centering
    \includegraphics[width=0.5\textwidth]{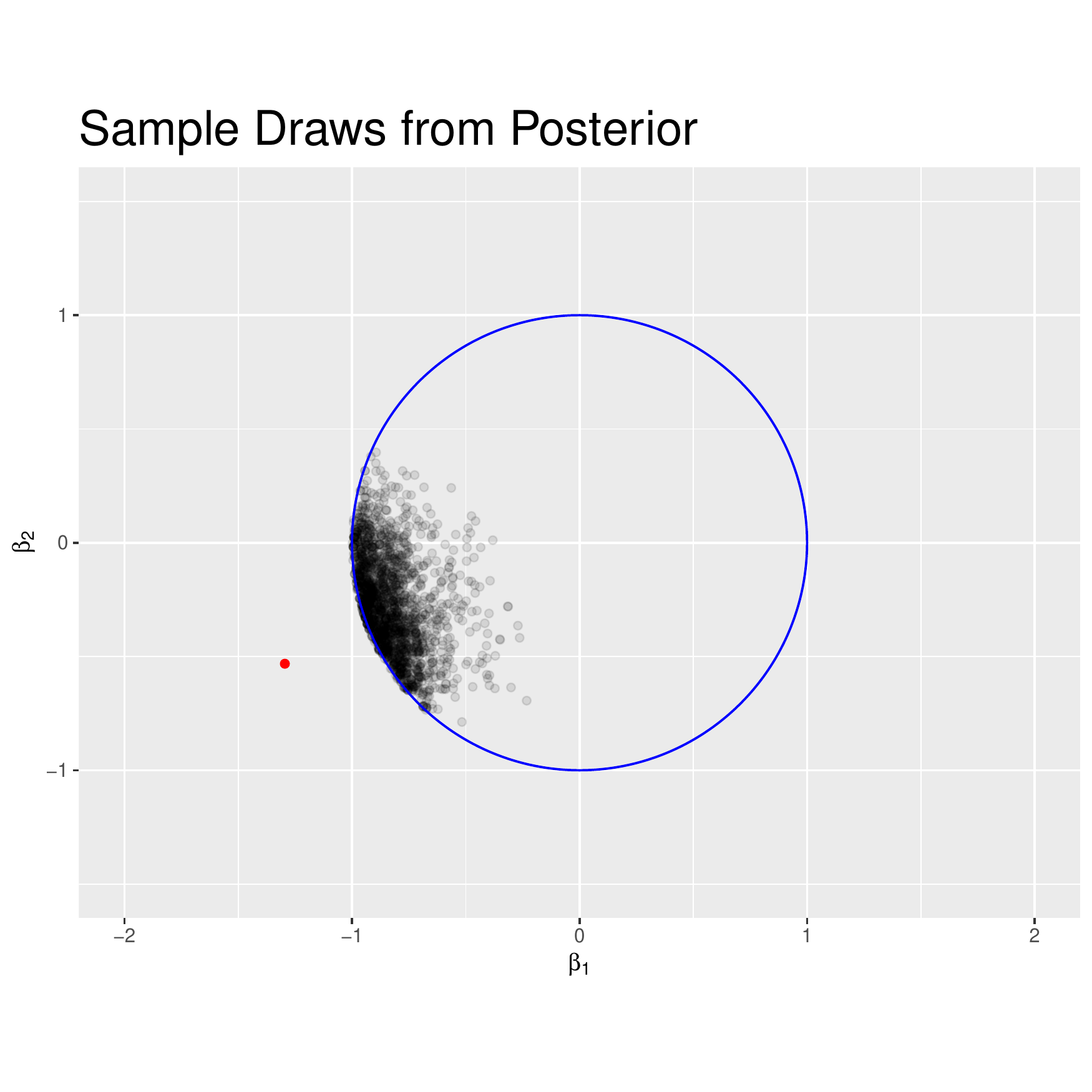}
    \caption{Draws from relaxed posterior, ridge regression.}
    \label{fig:opt}
\end{figure}

We choose $p=2$ for easy of visualization, and generate the true $\bbeta = (-1.295, -0.532)$ to lie outside of $\calC$. We then draw $n=100$ observations from a linear model under $\bbeta$. To sample from the posterior distribution, we use the \texttt{stan} functionality in \texttt{R} that leverages NUTS-HMC \citep{hoffman2014no}, and set the hyperparameter $\rho = 10^3$ to tightly enforce the constraint.

Figure \ref{fig:opt} displays the sample draws. At a glance, one can see that the posterior posterior distribution is concentrated near the boundary in the bottom-left quadrant since the true $\bbeta$, denoted as a red point, lies in that direction. The posterior samples allow one to conduct inference. For instance, the 95\% equi-tailed credible intervals for $\beta_1$ and $\beta_2$ are $(-0.99, -0.54)$ and $(-0.65, 0.11)$, respectively.\\

We further examine the impact of squaring the distance-to-set operator on posterior sampling performance in this context. Figure \ref{fig:acf_trace} depicts the trace plots and autocorrelation function (ACF) under a squared and unsquared distance-to-set term in the prior. The trace plots suggest better mixing and slightly less stickiness in the sampling trajectories. Moreover, there is a noticeable reduction in dependence between sample draws when using the squared distance-to-set priors based on the ACF plots. Overall, this is a relatively simple example---the dimension of the constraint set matches the dimensions of the ambient space within which it is embedded. In fact, both squared and unsquared priors perform well, and the samples from the latter look essentially the same as Figure \ref{fig:opt}, though we already see computational improvements by examining properties of the chain. These differences become more pronounced as we consider more challenging settings, such as a lower dimensional constraint set in the next example. 

\begin{figure}
    \centering
        \includegraphics[width=0.9\textwidth]{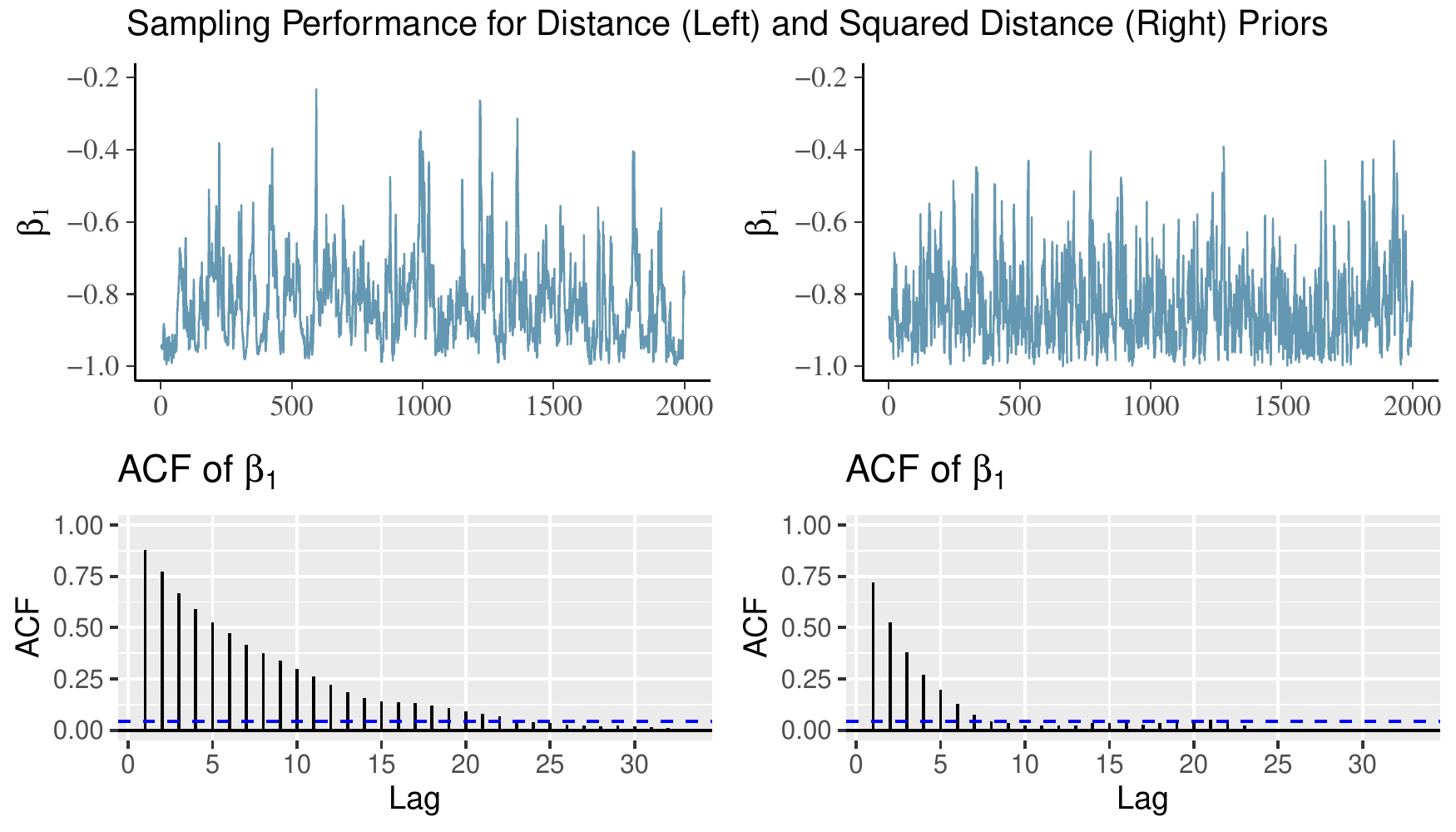}
    \caption{(Regression) Trace plots and ACF plots for both the unsquared (left) and squared (right) distance-to-set priors for $\beta_1$. Plots for $\beta_2$ look similar and are omitted.}
    \label{fig:acf_trace}
\end{figure}

\paragraph{Sampling along a Lower-Dimensional Surface}

The von Mises-Fisher $\textsf{vMF}(\alpha,\bF)$ distribution is supported on the sphere $S^p$ with $\alpha\geq 0$ and $\bF\in S^p$. When $\alpha=0$, this reduces to the uniform distribution on the sphere \citep{Fisher1953DispersionOA}. One can then envision the von Mises-Fisher distribution as being a spherical distribution concentrated around a unit vector. \citet{duan2020bayesian} observe that this distribution can be described as a multivariate normal with mean vector $\bF\in\bbR^{p+1}$ constrained to the unit sphere. They further consider a generalization in which the multivariate normal likelihood is replaced with a multivariate Student-$t$ distribution with $m$ degrees of freedom, mean vector $\bF\in\bbR^{p+1}$, and variance $\sigma^2\bI_{p+1}$. Using distance-to-set priors, we revisit this setting and relax the constraint so that the points have to lie close to the constraint surface, targeting sampling from the following distribution:

$$
\widetilde{\pi}(\btheta\mid\by) \propto \left(1 + \frac{\|\bF - \btheta\|_2^2}{m\sigma^2} \right)^{-\frac{m+p}{2}}\exp\left(-\frac{\rho}{2}\dist(\btheta,S^p)^2 \right)
$$

Observe that $S^2$ has a smaller dimension that the space within which it is embedded, namely $\bbR^3$. We demonstrate how one would use distance-to-set constraint relaxation in this setting, we specify the the projection $P_{S^p}$, which maps $0\neq \btheta\in\bbR^n$ to $\btheta/\|\btheta\|$. Thus, $\dist(\btheta,S^p) = \|\btheta - P_{S^p}(\btheta)\|$.
In \citet{duan2020bayesian}, the distance from the constraint is considered algebraically $\nu(\btheta) = |\btheta^\intercal\btheta - 1|$. In essence, the distance from $\btheta$ to $S^p$ is given by the distance in the level curve it lies on $\btheta^\intercal\btheta$ and the level curve defining $S^p$. As such, we refer to this as the level set relaxation prior in comparisons reported here.\\

We now compare these two Bayesian constraint relaxation approaches using \texttt{stan}. For sampling using the relaxation $\nu(\btheta)$, we use publicly accessible code obtainable in \citet{duan}. For our distance-to-set prior, we only need to update the constraint relaxation term. We summarize how these Bayesian constraint relaxation methods perform below for $p=2$ (the sphere that forms the boundary of the Euclidean $\ell_2$ unit ball in $\bbR^3$). Figure \ref{fig:vMF_plot} plots results after drawing 2000 samples points using the peer algorithms, thinned by a factor of 10 for visual clarity. The distance-to-set prior mimics the theoretical draws from the $\textsf{vMF}$ distribution, with some deviation due to a mild degree of constraint relaxation and slightly different tail behavior. In contrast, the level set relaxation prior leads to a chain that gets stuck during sampling, and the range of samples do not appear to be near the target constraint surface.

\begin{figure*}
    \centering
    \begin{subfigure}{0.32\textwidth}
         \includegraphics[width=\textwidth]{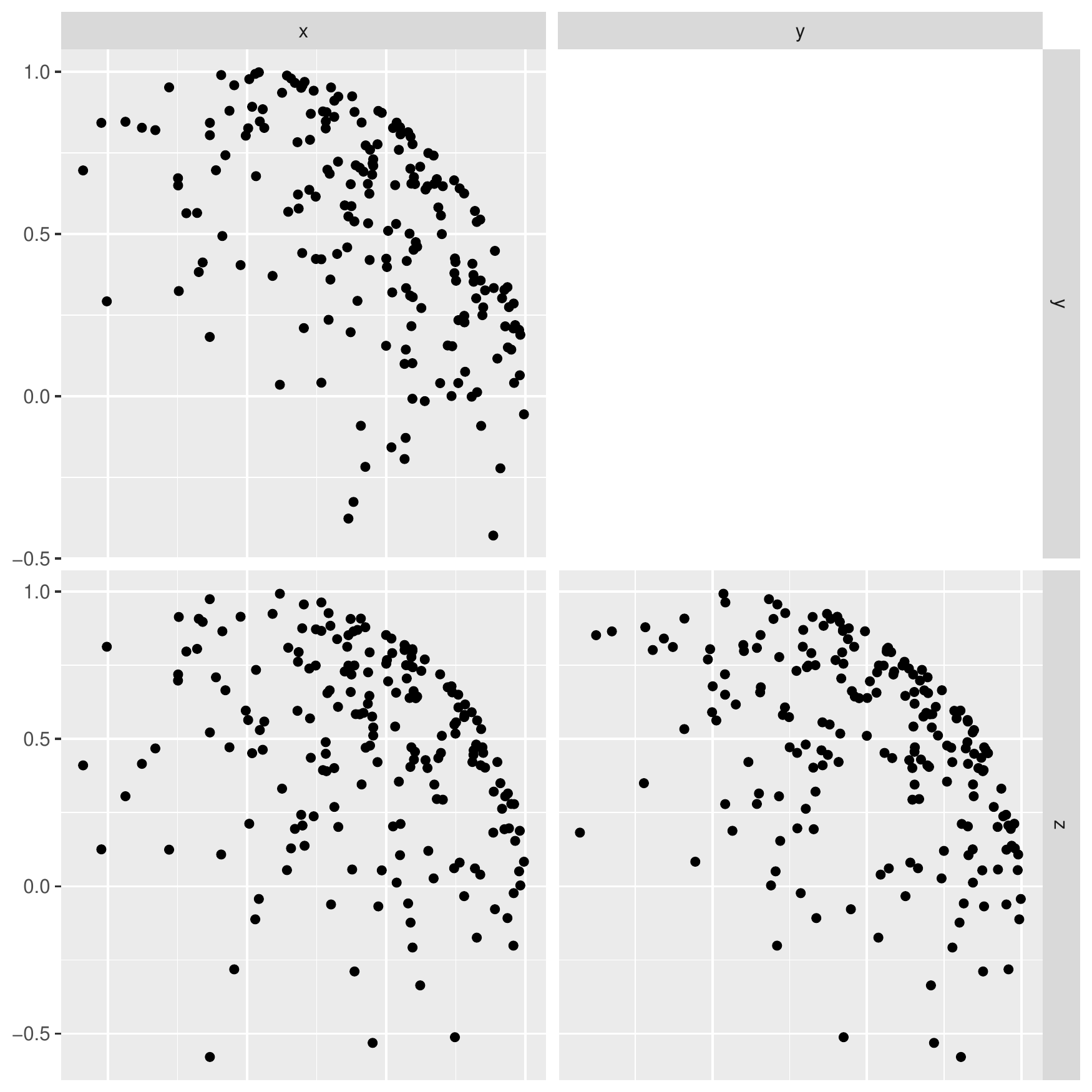}
         \caption{Theoretical Draws (vMF)}
     \end{subfigure}
     \hfill
     \begin{subfigure}{0.32\textwidth}
         \includegraphics[width=\textwidth]{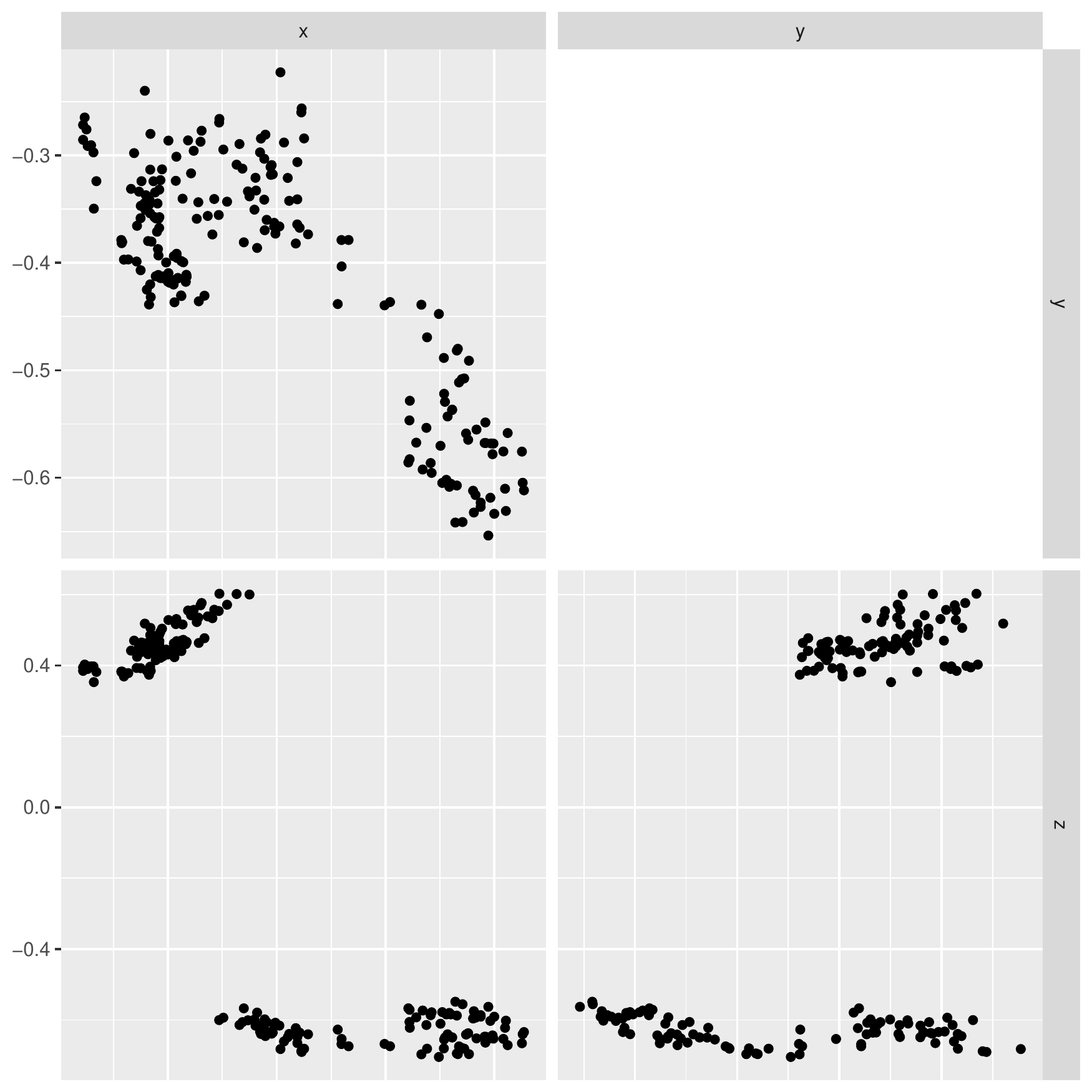}
         \caption{Level Set (RvMF)}
     \end{subfigure}
     \hfill
     \begin{subfigure}{0.32\textwidth}
         \includegraphics[width=\textwidth]{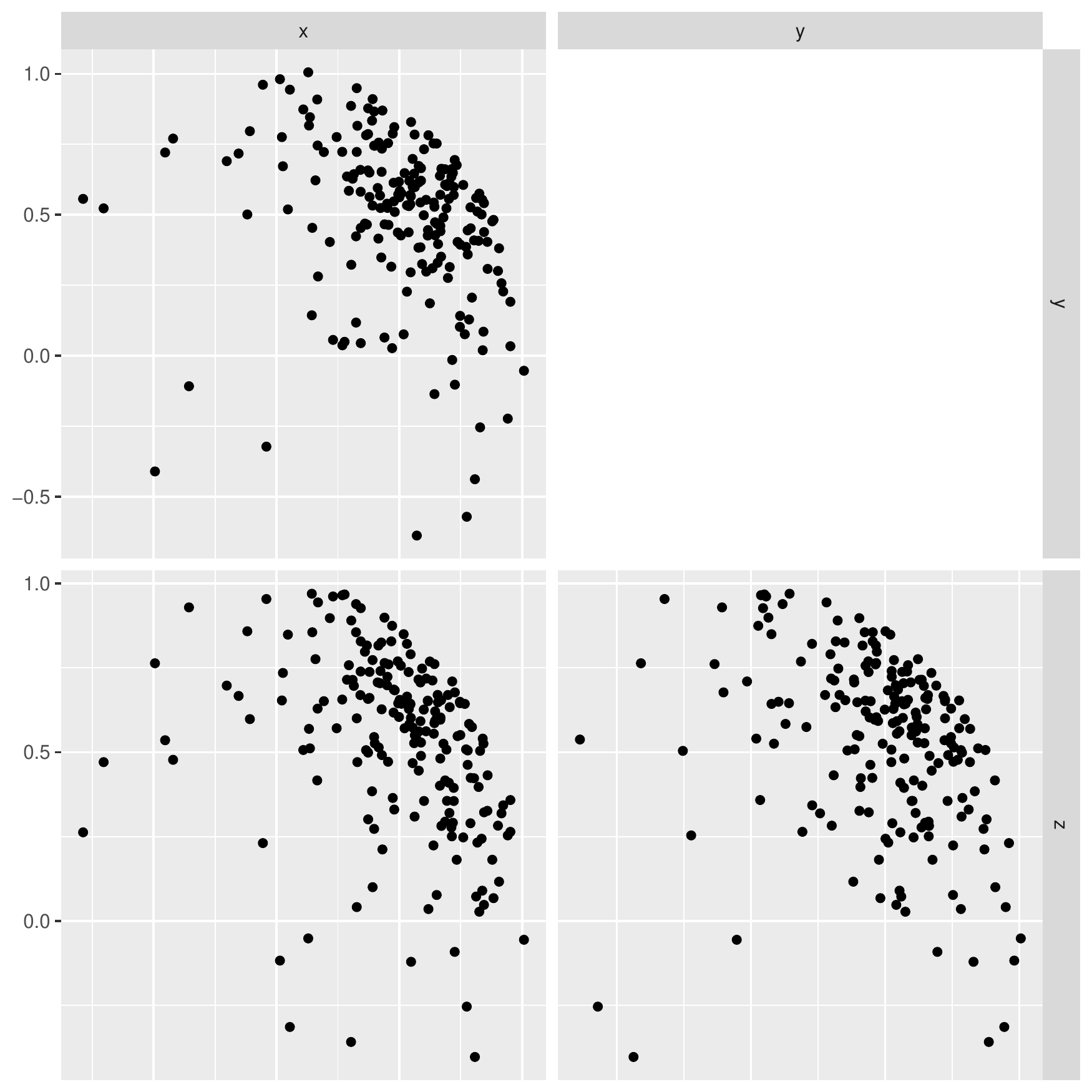}
         \caption{Distance-to-Set (RvMF)}
     \end{subfigure}     
     \caption{
     Exact draws using the using the \texttt{rvmf} function in the \texttt{rFast} package compared to samples using the method of \citet{duan2020bayesian} and our proposed method under $\rho = 10^{5}$ with $\bF = (1/\sqrt{3},1/\sqrt{3},1/\sqrt{3})$, $\sigma^2 = 0.1$, and $m=3$. The left plot is from the vMF distribution, while the middle and right plots are from the Robust vMF distribution.  
     } \vspace{-10pt}
    \label{fig:vMF_plot}
\end{figure*}

\begin{table}[ht]
\centering
\caption{Sampling Performance for Level Set Prior vs. Distance-to-Set Prior on Robust vMF Distribution}
\scalebox{0.95}{
\begin{tabular}{@{\extracolsep{4pt}}lc | rrrr  |rrrr@{}}
  \multicolumn{2}{c}{}& \multicolumn{4}{c}{Level Set Relaxation Prior} & \multicolumn{4}{c}{Distance-to-Set Prior} \\
  \cline{3-6}  \cline{7-10}
$\rho$ & Axis & Mean & 2.5\% & 97.5\% & ESS & Mean & 2.5\% & 97.5\% & ESS \\ 
  \midrule
1000 & x & 0.59 & 0.18 & 0.94 & 31.55 & 0.52 & -0.02 & 0.93 & 853.22 \\ 
   & y & 0.54 & 0.05 & 0.86 & 9.38 & 0.52 & -0.08 & 0.93 & 736.91 \\ 
   & z & 0.48 & 0.04 & 0.88 & 11.78 & 0.53 & -0.11 & 0.96 & 728.31 \\ 
  10000 & x & 0.26 & -0.10 & 0.57 & 1.35 & 0.51 & -0.13 & 0.92 & 750.94 \\ 
   & y & 0.41 & -0.13 & 0.78 & 1.05 & 0.51 & -0.14 & 0.93 & 650.81 \\ 
   & z & -0.75 & -1.00 & -0.47 & 1.02 & 0.51 & -0.09 & 0.93 & 622.50 \\ 
  1e+05 & x & 0.27 & 0.07 & 0.46 & 1.00 & 0.50 & -0.21 & 0.92 & 751.80 \\ 
   & y & 0.06 & -0.88 & 0.99 & 1.00 & 0.50 & -0.29 & 0.94 & 600.63 \\ 
   & z & -0.01 & -0.14 & 0.12 & 1.00 & 0.49 & -0.22 & 0.92 & 702.80 \\ 
  1e+06 & x & -0.38 & -0.46 & -0.30 & 1.00 & 0.52 & -0.02 & 0.92 & 779.55 \\ 
   & y & 0.51 & 0.06 & 0.95 & 1.00 & 0.51 & -0.15 & 0.92 & 559.85 \\ 
   & z & -0.40 & -0.89 & 0.08 & 1.00 & 0.53 & -0.06 & 0.93 & 542.38 \\ 
   \bottomrule
\end{tabular}
}
\label{tab:vMF_tbl1}
\end{table}

Figure \ref{fig:vMF_plot} shows a cluster of sample draws for the $d$-expansion prior, suggesting that sampler is sticky and does not explore the space well. On the other hand, our distance-to-set prior resembles draws from the theoretical distribution fairly well. Table \ref{tab:vMF_tbl1} further reinforces this point. As we decrease $\lambda$ (i.e., enforce the constraint more strictly), we see that the distribution concentrates away from the mean vector $(1/\sqrt{3}, 1/\sqrt{3}, 1/\sqrt{3})$, and the ESS (out of 1,000 post-warmup iterations per chain, 2 chains) is low. Our novel distance-to-set prior concentrates around the mean vector, and the ESS remains consistently high. Finally, the acceptance rate for our method ranges between 0.932---0.937, while it ranges between 0.766---0.815 for the level set relaxation prior. As it is desirable for acceptance rates to be close to 100\% since Metropolis steps in HMC are meant to correct only for numerical error, this makes a strong case for the sampling performance under our approach.

\paragraph{Real Data Case Study}
While the simulation studies highlight advantages of our approach, the final example considers a case study whose constraint is nontrivial to incorporate within prior methods. We apply our distance-to-set priors  to constraints imposed on contingency tables imbued with isotonic constraints. We follow the design introduced by \cite{agresti2002analysis} in which four treatment group doses were given (Placebo, Low dose, Medium dose, High dose) to patients with subarachnoid hemorrhage, and the outcomes were examined (Good recovery, Minor disability, Major disability, Vegetative state, and Death) to construct a dose-response curve. The data appears in \cite{agresti2002analysis}, summarized in the table below. The constraint on this table that is natural to assume is for the outcome to stochastically increase with respect to the treatment, which we formalize below.\\

A model for the order-based constraints on this contingency table is given by \cite{sen2018constrained}. Following this treatment, suppose we have $n$ observations exhaustively distributed over an $I\times J$ contingency table with entries $n_{ij}$ for $i\in[I]$ and $j\in[J]$, where $[m] := \{1,\ldots,m\}$. We let the rows represent the doses, and the columns represent the outcomes. Suppose further that the probability of each observation ending up in the $(i,j)$ cell is given by $\theta_{ij}$. Let $n_{[i]} := \sum_{i\in[I]} n_{ij}$, and similarly, $\btheta_{[i]} := (\theta_{i1},\ldots,\theta_{iJ})$:
\begin{center}
    \scalebox{0.9}{
    \begin{tabular}{c|ccccc}
        & Recovery & Vegetative State & Major Disability & Minor Disability & Death \\\hline
        Placebo & $\theta_{11}$ & $\theta_{12}$ & $\theta_{13}$ & $\theta_{14}$ & $\theta_{15}$ \\
        Low Dose & $\theta_{21}$ & $\theta_{22}$ & $\theta_{23}$ & $\theta_{24}$ & $\theta_{25}$ \\
        Medium Dose & $\theta_{31}$ & $\theta_{32}$ & $\theta_{33}$ & $\theta_{34}$ & $\theta_{35}$ \\
        High Dose & $\theta_{41}$ & $\theta_{42}$ & $\theta_{43}$ & $\theta_{44}$ & $\theta_{45}$ \\
    \end{tabular}}
\end{center}
Then we take the following model: for each $i\in[I]$ suppose 
$$
(n_{i1},\ldots,n_{iJ}) \overset{\indep}{\sim} \mathsf{Multi}(n_{[i]}, \btheta_{[i]}),\quad \btheta_{[i]}\overset{\indep}{\sim}\textsf{Dir}(\balpha).
$$
We impose the stochastic dominance constraint on the probabilities governing the contingency table as follows: for all $i\in[I]$, for all $j\in[J]$,
$$
\sum_{k=1}^j \theta_{i+1,k} \geq \sum_{k=1}^j \theta_{ik}.
$$
We may write the set of such probabilities obeying this stochastic dominance  as the following isotonic constraint:
$$
\Theta_{CT} := \left\{(\theta_{ij})_{i\in I, j\in J}\Bigg|\sum_{k=1}^j \theta_{i+1,k} \geq \sum_{k=1}^j \theta_{ik}  \text{ for } i\in[I], j\in[J]\right\}
$$

As with any application of distance-to-set priors, a crucial subroutine requires computing the projection onto $\bTheta_{CT}$.
It is not clear whether implementing the projection directly in a Stan file \citep{stan} is possible; instead, we implement our HMC-based sampler  directly in \texttt{R}. We use a quadratic programming algorithm \citep{goldfarb1982dual, goldfarb1983numerically}
available in the \texttt{quadprog} library \citep{turlach2013quadprog} to compute the projections that appear in the gradient of the log-posterior, and include  the complete implementation details to the Appendix. It is worth noting that despite four seemingly independent multinomial-Dirichlet models, the stochastic dominance constraints entangle the distributions. The resulting constrained problem is complex, and distinct from a standard setting with separate isotonic constraints \citep{chatterjee2015risk}, which can be handled using the simpler pooled adjacent violators algorithm (PAVA).\\

\begin{figure}
    \centering
        \includegraphics[width=0.75\textwidth]{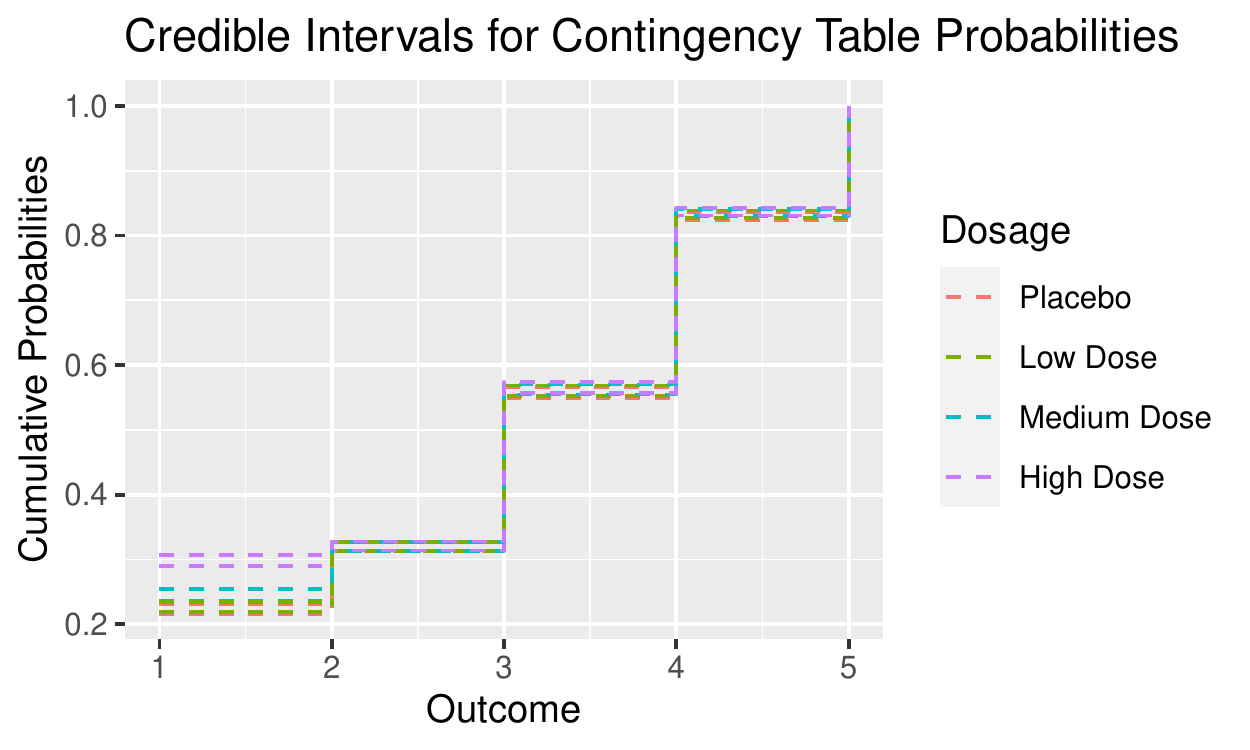}
    \caption{(Contingency Table) 95\% posterior credible intervals under distance-to-set priors with   $\rho=7.5\times 10^5$.}
    \label{fig:cred_int_plot}
\end{figure}

Figure \ref{fig:cred_int_plot} displays the 95\% credible intervals for the probabilities governing the contingency tables; detailed numerical values are tabulated in the Appendix. As we consider  a large value of $\rho=7.5\times 10^5$, it is not surprising that the isotonic constraints are well-respected at the quantiles. Despite the high degree of constraint enforcement, Figure \ref{fig:contingency_trace} shows that our approach to constraint relaxation maintains strong performance despite a na\"ive implementation.

\begin{figure}
    \centering
        \includegraphics[width=0.75\textwidth]{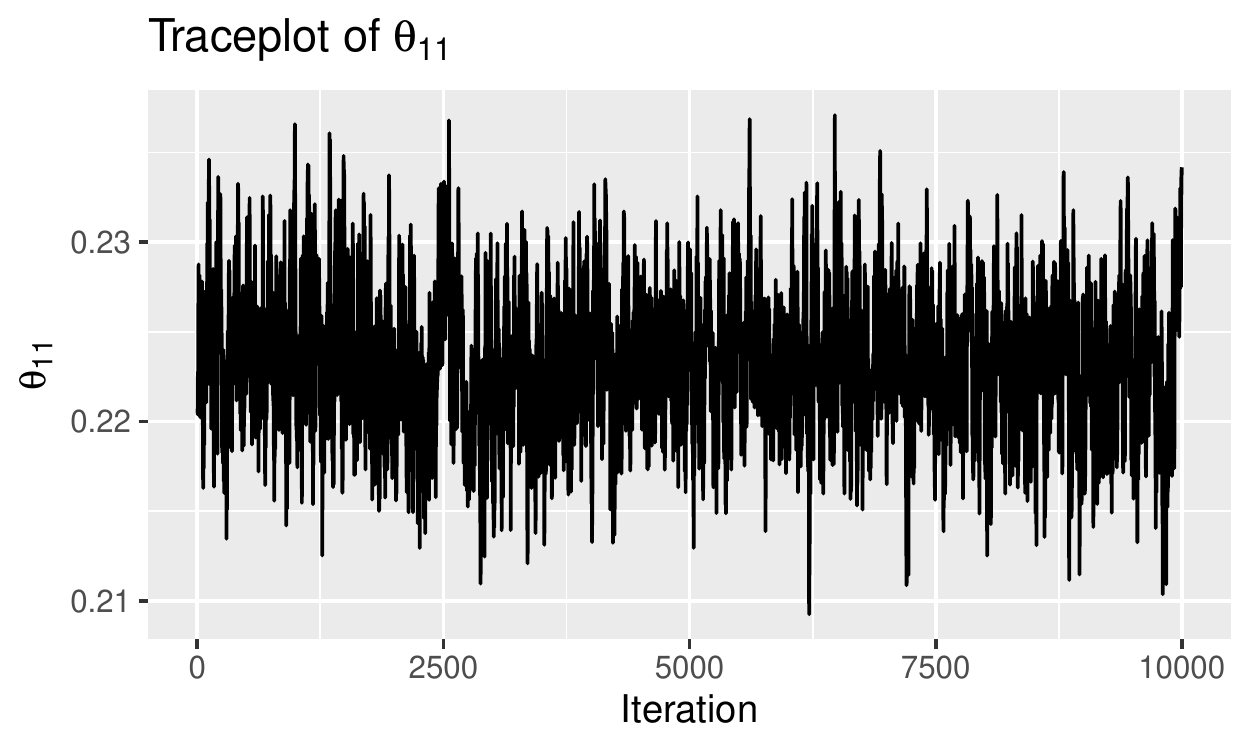}
    \caption{(Contingency Table) The trace plot of samples for the parameter $\theta_{1,1}$ in the contingency table.}
    \label{fig:contingency_trace}
\end{figure}

\section{Discussion}

In this work, we recast distance-to-set regularization within a Bayesian framework, and propose a flexible class of distance-to-set priors to allow uncertainty quantification under a well-defined posterior. Our class of priors is smooth, which is a crucial factor in the improved sampling performance under implementations such as HMC. Empirical results reflect this design, so that performance does not deteriorate even when the constraint set is a lower-dimensional manifold, the parameter $\rho$ is large, or $P_{\bTheta}(\btheta)$ cannot be expressed  analytically in closed form. When $\rho\to\infty$, distance-to-set priors agree with the sharply constrained Bayesian inference problem, and when $\rho<\infty$, constraint-relaxed posteriors are optimal approximations in terms of KL divergence, while respecting the constraint to a specified amount in expectation.\\

There is a number of open inferential extensions that remain interesting directions for future work under our distance-to-set regularization approach. For instance, loss functions do not always originate from likelihoods, so there is value in extending our framework to more general settings such as Gibbs posteriors \citep{bissiri2016general}. While our distance-to-set framework assumes an $\ell_2$ metric, considering other measures and divergences may be of particular interest for various data settings. Indeed, this is related to the underexplored connection between constraint relaxation and information geometry that we begin to examine via connections to exponential tilting. We invite readers to consider future investigation of these promising research directions.

\section*{Acknowledgments}
We are grateful for insightful discussions with Mike West and Emily Tallman on exponential tilting and KL divergence.

\bibliographystyle{plainnat}
\bibliography{bibliography}

\begin{thebibliography}{40}
\providecommand{\natexlab}[1]{#1}
\providecommand{\url}[1]{\texttt{#1}}
\expandafter\ifx\csname urlstyle\endcsname\relax
  \providecommand{\doi}[1]{doi: #1}\else
  \providecommand{\doi}{doi: \begingroup \urlstyle{rm}\Url}\fi

\bibitem[Agresti and Coull(2002)]{agresti2002analysis}
Alan Agresti and Brent~A Coull.
\newblock The analysis of contingency tables under inequality constraints.
\newblock \emph{Journal of Statistical Planning and Inference}, 107\penalty0
  (1-2):\penalty0 45--73, 2002.

\bibitem[Betancourt(2012)]{betancourt2012cruising}
Michael Betancourt.
\newblock Cruising the simplex: {H}amiltonian {M}onte {C}arlo and the
  {D}irichlet distribution.
\newblock In \emph{AIP Conference Proceedings 31st}, volume 1443, pages
  157--164. American Institute of Physics, 2012.

\bibitem[Betancourt and Girolami(2015)]{betancourt2015hamiltonian}
Michael Betancourt and Mark Girolami.
\newblock Hamiltonian {M}onte {C}arlo for hierarchical models.
\newblock \emph{Current trends in {B}ayesian methodology with applications},
  79\penalty0 (30):\penalty0 2--4, 2015.

\bibitem[Billingsley(2013)]{billingsley2013convergence}
Patrick Billingsley.
\newblock \emph{Convergence of probability measures}.
\newblock John Wiley \& Sons, 2013.

\bibitem[Bissiri et~al.(2016)Bissiri, Holmes, and Walker]{bissiri2016general}
Pier~Giovanni Bissiri, Chris~C Holmes, and Stephen~G Walker.
\newblock A general framework for updating belief distributions.
\newblock \emph{Journal of the Royal Statistical Society: Series B (Statistical
  Methodology)}, 78\penalty0 (5):\penalty0 1103--1130, 2016.

\bibitem[Boyd and Vandenberghe(2004)]{boyd2004convex}
Stephen Boyd and Lieven Vandenberghe.
\newblock \emph{Convex optimization}.
\newblock Cambridge university press, 2004.

\bibitem[Byrne and Girolami(2013)]{byrne2013geodesic}
Simon Byrne and Mark Girolami.
\newblock Geodesic {M}onte {C}arlo on embedded manifolds.
\newblock \emph{Scandinavian Journal of Statistics}, 40\penalty0 (4):\penalty0
  825--845, 2013.

\bibitem[Chatterjee et~al.(2015)Chatterjee, Guntuboyina, and
  Sen]{chatterjee2015risk}
Sabyasachi Chatterjee, Adityanand Guntuboyina, and Bodhisattva Sen.
\newblock On risk bounds in isotonic and other shape restricted regression
  problems.
\newblock \emph{The Annals of Statistics}, 43\penalty0 (4):\penalty0
  1774--1800, 2015.

\bibitem[Chi et~al.(2014)Chi, Zhou, and Lange]{chi2014distance}
Eric~C Chi, Hua Zhou, and Kenneth Lange.
\newblock Distance majorization and its applications.
\newblock \emph{Mathematical programming}, 146\penalty0 (1):\penalty0 409--436,
  2014.

\bibitem[Csisz{\'a}r(1975)]{csiszar1975divergence}
Imre Csisz{\'a}r.
\newblock I-divergence geometry of probability distributions and minimization
  problems.
\newblock \emph{The annals of probability}, pages 146--158, 1975.

\bibitem[Duan()]{duan}
Leo Duan.
\newblock Bayescore.
\newblock \url{https://github.com/leoduan/BayesCoRe}.
\newblock Accessed: Sept. 7, 2022.

\bibitem[Duan et~al.(2020)Duan, Young, Nishimura, and Dunson]{duan2020bayesian}
Leo~L Duan, Alexander~L Young, Akihiko Nishimura, and David~B Dunson.
\newblock Bayesian constraint relaxation.
\newblock \emph{Biometrika}, 107\penalty0 (1):\penalty0 191--204, 2020.

\bibitem[Fisher(1953)]{Fisher1953DispersionOA}
Rory~A. Fisher.
\newblock Dispersion on a sphere.
\newblock \emph{Proceedings of the Royal Society of London. Series A.
  Mathematical and Physical Sciences}, 217:\penalty0 295 -- 305, 1953.

\bibitem[Ghosh(1992)]{ghosh1992constrained}
Malay Ghosh.
\newblock Constrained {B}ayes estimation with applications.
\newblock \emph{Journal of the American Statistical Association}, 87\penalty0
  (418):\penalty0 533--540, 1992.

\bibitem[Goldfarb and Idnani(1982)]{goldfarb1982dual}
Donald Goldfarb and Ashok Idnani.
\newblock Dual and primal-dual methods for solving strictly convex quadratic
  programs.
\newblock In \emph{Numerical analysis}, pages 226--239. Springer, 1982.

\bibitem[Goldfarb and Idnani(1983)]{goldfarb1983numerically}
Donald Goldfarb and Ashok Idnani.
\newblock A numerically stable dual method for solving strictly convex
  quadratic programs.
\newblock \emph{Mathematical programming}, 27\penalty0 (1):\penalty0 1--33,
  1983.

\bibitem[Gramacy et~al.(2016)Gramacy, Gray, Le~Digabel, Lee, Ranjan, Wells, and
  Wild]{gramacy2016modeling}
Robert~B Gramacy, Genetha~A Gray, S{\'e}bastien Le~Digabel, Herbert~KH Lee,
  Pritam Ranjan, Garth Wells, and Stefan~M Wild.
\newblock Modeling an augmented lagrangian for blackbox constrained
  optimization.
\newblock \emph{Technometrics}, 58\penalty0 (1):\penalty0 1--11, 2016.

\bibitem[Heng et~al.(2022)Heng, Zhou, and Chi]{heng2022bayesian}
Qiang Heng, Hua Zhou, and Eric~C Chi.
\newblock Bayesian trend filtering via proximal {M}arkov {C}hain {M}onte
  {C}arlo.
\newblock \emph{arXiv preprint arXiv:2201.00092}, 2022.

\bibitem[Hoffman et~al.(2014)Hoffman, Gelman, et~al.]{hoffman2014no}
Matthew~D Hoffman, Andrew Gelman, et~al.
\newblock The {N}o-{U}-{T}urn sampler: adaptively setting path lengths in
  {H}amiltonian {M}onte {C}arlo.
\newblock \emph{J. Mach. Learn. Res.}, 15\penalty0 (1):\penalty0 1593--1623,
  2014.

\bibitem[Jacob et~al.(2017)Jacob, Murray, Holmes, and Robert]{jacob2017better}
Pierre~E Jacob, Lawrence~M Murray, Chris~C Holmes, and Christian~P Robert.
\newblock Better together? {S}tatistical learning in models made of modules.
\newblock \emph{arXiv preprint arXiv:1708.08719}, 2017.

\bibitem[Keys et~al.(2019)Keys, Zhou, and Lange]{keys2019proximal}
Kevin~L Keys, Hua Zhou, and Kenneth Lange.
\newblock Proximal distance algorithms: Theory and practice.
\newblock \emph{The Journal of Machine Learning Research}, 20\penalty0
  (1):\penalty0 2384--2421, 2019.

\bibitem[Kook et~al.(2022)Kook, Lee, Shen, and Vempala]{kook2022sampling}
Yunbum Kook, Yin~Tat Lee, Ruoqi Shen, and Santosh~S Vempala.
\newblock Sampling with {R}iemannian {H}amiltonian {M}onte {C}arlo in a
  constrained space.
\newblock \emph{arXiv preprint arXiv:2202.01908}, 2022.

\bibitem[Lan et~al.(2014)Lan, Zhou, and Shahbaba]{lan2014spherical}
Shiwei Lan, Bo~Zhou, and Babak Shahbaba.
\newblock Spherical {H}amiltonian {M}onte {C}arlo for constrained target
  distributions.
\newblock In \emph{International Conference on Machine Learning}, pages
  629--637. PMLR, 2014.

\bibitem[Lange(2016)]{lange2016mm}
Kenneth Lange.
\newblock \emph{MM optimization algorithms}.
\newblock SIAM, 2016.

\bibitem[Munkres(2000)]{munkres2000topology}
James~R Munkres.
\newblock \emph{Topology}, volume~2.
\newblock Prentice Hall Upper Saddle River, 2000.

\bibitem[Neal et~al.(2011)]{neal2011mcmc}
Radford~M Neal et~al.
\newblock {MCMC} using {H}amiltonian dynamics.
\newblock \emph{Handbook of markov chain monte carlo}, 2\penalty0
  (11):\penalty0 2, 2011.

\bibitem[Parikh and Boyd(2014)]{parikh2014proximal}
Neal Parikh and Stephen Boyd.
\newblock Proximal algorithms.
\newblock \emph{Foundations and Trends in optimization}, 1\penalty0
  (3):\penalty0 127--239, 2014.

\bibitem[Pereyra(2016)]{pereyra2016proximal}
Marcelo Pereyra.
\newblock Proximal {M}arkov {C}hain {M}onte {C}arlo algorithms.
\newblock \emph{Statistics and Computing}, 26\penalty0 (4):\penalty0 745--760,
  2016.

\bibitem[Robertson et~al.(2005)Robertson, Tallman, and
  Whiteman]{robertson2005forecasting}
John~C Robertson, Ellis~W Tallman, and Charles~H Whiteman.
\newblock Forecasting using relative entropy.
\newblock \emph{Journal of Money, Credit, and Banking}, 37\penalty0
  (3):\penalty0 383--401, 2005.

\bibitem[Sen et~al.(2018)Sen, Patra, and Dunson]{sen2018constrained}
Deborshee Sen, Sayan Patra, and David Dunson.
\newblock Constrained inference through posterior projections.
\newblock \emph{arXiv preprint arXiv:1812.05741}, 2018.

\bibitem[{Stan Development Team}(2022)]{stan}
{Stan Development Team}.
\newblock {RStan}: the {R} interface to {Stan}, 2022.
\newblock URL \url{https://mc-stan.org/}.
\newblock R package version 2.21.5.

\bibitem[Tallman and West(2022)]{tallman2022entropic}
Emily Tallman and Mike West.
\newblock On entropic tilting and predictive conditioning.
\newblock \emph{arXiv preprint arXiv:2207.10013}, 2022.

\bibitem[Tibshirani(1996)]{tibshirani1996regression}
Robert Tibshirani.
\newblock Regression shrinkage and selection via the lasso.
\newblock \emph{Journal of the Royal Statistical Society: Series B
  (Methodological)}, 58\penalty0 (1):\penalty0 267--288, 1996.

\bibitem[Turlach and Weingessel(2013)]{turlach2013quadprog}
Berwin~A Turlach and Andreas Weingessel.
\newblock quadprog: Functions to solve quadratic programming problems. r
  package version 1.5-5, 2013.

\bibitem[West(2020)]{west2020perspectives}
Mike West.
\newblock Perspectives on {B}ayesian decision analysis and constrained
  forecasting.
\newblock \emph{arXiv preprint arXiv:2007.11037}, 2020.

\bibitem[Wright et~al.(1999)Wright, Nocedal, et~al.]{wright1999numerical}
Stephen Wright, Jorge Nocedal, et~al.
\newblock Numerical optimization.
\newblock \emph{Springer Science}, 35\penalty0 (67-68):\penalty0 7, 1999.

\bibitem[Wulbert(1968)]{wulbert1968continuity}
Daniel~Eliot Wulbert.
\newblock Continuity of metric projections.
\newblock \emph{Transactions of the American Mathematical Society},
  134\penalty0 (2):\penalty0 335--341, 1968.

\bibitem[Xu et~al.(2017)Xu, Chi, and Lange]{xu2017generalized}
Jason Xu, Eric Chi, and Kenneth Lange.
\newblock Generalized linear model regression under distance-to-set penalties.
\newblock \emph{Advances in Neural Information Processing Systems}, 30, 2017.

\bibitem[Xu et~al.(2021)Xu, Zhou, Hu, and Duan]{xu2021bayesian}
Maoran Xu, Hua Zhou, Yujie Hu, and Leo~L Duan.
\newblock Bayesian inference using the proximal mapping: Uncertainty
  quantification under varying dimensionality.
\newblock \emph{arXiv preprint arXiv:2108.04851}, 2021.

\bibitem[Zhou et~al.(2022)Zhou, Chi, and Zhou]{zhou2022proximal}
Xinkai Zhou, Eric~C Chi, and Hua Zhou.
\newblock Proximal {MCMC} for {B}ayesian inference of constrained and
  regularized estimation.
\newblock \emph{arXiv preprint arXiv:2205.07378}, 2022.

\end{thebibliography}

\pagebreak

\section*{Appendix}

\subsection*{Numerical Data For Stochastic Ordering Case Study}

We provide numerical data supporting the credible intervals shown in Figure 5 of the paper.

\begin{table}[ht]
\centering
\begin{tabular}{r|rrrrr}
  \multicolumn{6}{c}{$2.5^\text{th}$ Percentile}\\
  \hline
 & $j=1$ & $j=2$ & $j=3$ & $j=4$ & $j=5$ \\ 
  \hline
$i=1$ & 0.2161 & 0.3126 & 0.5496 & 0.8229 & 1.0000 \\ 
  $i=2$ & 0.2194 & 0.3131 & 0.5527 & 0.8272 & 1.0000 \\ 
  $i=3$ & 0.2365 & 0.3134 & 0.5544 & 0.8300 & 1.0000 \\
  $i=4$ & 0.2900 & 0.3137 & 0.5566 & 0.8306 & 1.0000 \\ 
   \hline
\end{tabular}
\end{table}

\begin{table}[ht]
\centering
\begin{tabular}{r|rrrrr}
\multicolumn{6}{c}{$50^\text{th}$ Percentile}\\
  \hline
 & $j=1$ & $j=2$ & $j=3$ & $j=4$ & $j=5$ \\ 
  \hline
$i=1$ & 0.2234 & 0.3194 & 0.5581 & 0.8297 & 1.0000 \\ 
  $i=2$ & 0.2266 & 0.3199 & 0.5605 & 0.8329 & 1.0000 \\ 
  $i=3$ & 0.2456 & 0.3202 & 0.5621 & 0.8356 & 1.0000 \\ 
  $i=4$ & 0.2983 & 0.3206 & 0.5646 & 0.8360 & 1.0000 \\ 
   \hline
\end{tabular}
\end{table}

\begin{table}[ht]
\centering
\begin{tabular}{r|rrrrr}
\multicolumn{6}{c}{$97.5^\text{th}$ Percentile}\\
  \hline
 & $j=1$ & $j=2$ & $j=3$ & $j=4$ & $j=5$ \\ 
  \hline
$i=1$ & 0.2310 & 0.3265 & 0.5663 & 0.8358 & 1.0000 \\ 
  $i=2$ & 0.2346 & 0.3268 & 0.5683 & 0.8384 & 1.0000 \\
  $i=3$ & 0.2546 & 0.3271 & 0.5698 & 0.8412 & 1.0000 \\ 
  $i=4$ & 0.3065 & 0.3277 & 0.5734 & 0.8418 & 1.0000 \\ 
   \hline
\end{tabular}
\end{table}

\subsection*{Proofs}

\setcounter{prop}{0}
\begin{prop}
Under Assumptions 1 and 2, the constraint relaxed posterior $\widetilde{\pi}(\btheta\mid\by)$ is a proper density.
\end{prop}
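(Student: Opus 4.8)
The plan is to show that the normalizing constant
\[
Z_\rho := \int_{\bbR^d} L(\btheta\mid\by)\,\pi(\btheta)\exp\!\left(-\tfrac{\rho}{2}\dist(\btheta,\bTheta)^2\right)d\btheta
\]
is finite and strictly positive; properness of $\widetilde\pi(\btheta\mid\by)$ then follows immediately, since $\widetilde\pi(\btheta\mid\by)=Z_\rho^{-1}L(\btheta\mid\by)\pi(\btheta)\exp(-\tfrac{\rho}{2}\dist(\btheta,\bTheta)^2)$ is nonnegative and integrates to $1$.

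First I would note that the integrand is a nonnegative measurable function: $L(\btheta\mid\by)\pi(\btheta)$ is a density (hence measurable and nonnegative by Assumptions 1 and 2), and $\btheta\mapsto\dist(\btheta,\bTheta)$ is $1$-Lipschitz (an infimum of the continuous maps $\btheta\mapsto\|\btheta-\by\|_2$), hence continuous, so $\exp(-\tfrac{\rho}{2}\dist(\btheta,\bTheta)^2)$ is continuous and takes values in $(0,1]$ because $\rho>0$ and $\dist(\btheta,\bTheta)^2\ge 0$. For the upper bound, I would use exactly this pointwise inequality $\exp(-\tfrac{\rho}{2}\dist(\btheta,\bTheta)^2)\le 1$ to get
\[
Z_\rho \;\le\; \int_{\bbR^d} L(\btheta\mid\by)\,\pi(\btheta)\,d\btheta \;<\;\infty,
\]
where finiteness is precisely Assumption \ref{assump2}.

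For strict positivity, I would argue that since $\int_{\bbR^d} L(\btheta\mid\by)\pi(\btheta)\,d\btheta>0$ (it is finite and nonzero, being proportional to a genuine posterior; equivalently, $L(\btheta\mid\by)\pi(\btheta)>0$ on a set of positive Lebesgue measure by Assumption \ref{assump1}), and the factor $\exp(-\tfrac{\rho}{2}\dist(\btheta,\bTheta)^2)$ is strictly positive everywhere on $\bbR^d$, the product is strictly positive on a set of positive Lebesgue measure, whence $Z_\rho>0$. Combining the two bounds gives $0<Z_\rho<\infty$, which is what is needed.

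I do not anticipate a genuine obstacle here: the only point requiring a word of care is confirming that the normalizing constant does not vanish (so that the division defining $\widetilde\pi$ is legitimate), and this is handled by the full-support assumption together with the strict positivity of the exponential tilt. Measurability of $\dist(\cdot,\bTheta)$ is routine (Lipschitz continuity), and no convexity or closedness of $\bTheta$ (Assumption \ref{assump3}) is needed for this statement — only Assumptions \ref{assump1} and \ref{assump2}, as stated.
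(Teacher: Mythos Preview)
Your proof is correct and follows essentially the same approach as the paper: both bound the exponential tilt by $1$ pointwise and invoke Assumption~\ref{assump2} to conclude finiteness of the normalizing constant. You add the (useful) verification that $Z_\rho>0$ and a word on measurability, which the paper omits, but the core argument is identical.
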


\begin{proof}
Noting that $\exp\left(-\frac{\rho}{2}\dist(\btheta,\bTheta)^2\right)\leq 1$ for all $\btheta\in\bbR^d$, we immediately obtain from Assumption 2
$$
\int_{\bbR^d} L(\btheta\mid\by)\pi(\btheta)\exp\left(-\frac{\rho}{2}\dist(\btheta,\bTheta)^2\right)\,d\btheta \leq \int_{\bbR^d} L(\btheta\mid\by)\pi(\btheta)\,d\btheta < \infty.
$$
\end{proof}

\setcounter{thm}{0}
\begin{thm}
Suppose the unconstrained posterior $\pi(\btheta\mid \by)$ is strictly log-concave. Let $\{\widetilde{\pi}_{\rho_k}(\btheta\mid\by)\}_{k\in\bbN}$ be a sequence of constraint-relaxed posterior distributions where $\rho_k\uparrow\infty$ as $k\to\infty$. Further, define the following MAP estimators
\[ \widehat{\btheta}^M = \argmax_{\btheta} \overline{\pi}(\btheta\mid\by),
   \quad  \widehat{\btheta}_{\rho_k}^M = \argmax_{\btheta} \widetilde{\pi}_{\rho_k}(\btheta\mid\by).\]
Then the sequence $\widehat{\btheta}_{\rho_k}^M \to \widehat{\btheta}^M$ as $k\to\infty$.
\end{thm}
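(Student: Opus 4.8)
The plan is to exploit strict log-concavity to get uniqueness of all the relevant maximizers, and then argue convergence by a standard epi-convergence / $\Gamma$-convergence argument applied to the negative log-posteriors. Write $U(\btheta) = -\log\pi(\btheta\mid\by)$ for the (strictly convex) unconstrained negative log-posterior, $U_{\rho_k}(\btheta) = U(\btheta) + \tfrac{\rho_k}{2}\dist(\btheta,\bTheta)^2$ for the relaxed version, and $\overline{U}(\btheta) = U(\btheta) + \iota_{\bTheta}(\btheta)$ for the sharply constrained version, where $\iota_{\bTheta}$ is the $0/\infty$ indicator of $\bTheta$. Then $\widehat{\btheta}^M = \argmin \overline{U}$ and $\widehat{\btheta}_{\rho_k}^M = \argmin U_{\rho_k}$. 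The first step is to note that each of these minimizers exists and is unique: $U$ is strictly convex, $\tfrac{\rho_k}{2}\dist(\cdot,\bTheta)^2$ and $\iota_{\bTheta}$ are convex (Assumption 3), so $U_{\rho_k}$ and $\overline{U}$ are strictly convex; existence requires an argument that the minimizers don't escape to infinity, which follows since a proper strictly log-concave density has a bounded superlevel set at any height, hence $U$ (and a fortiori $U_{\rho_k}, \overline{U}$, which dominate $U$) is coercive.

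The second step is the key comparison. Since $\dist(\cdot,\bTheta)^2 \le \iota_{\bTheta}(\cdot)$ pointwise and $\rho_k$ is increasing, the functions $U_{\rho_k}$ increase pointwise to $\overline{U}$. Evaluating at the constrained optimum gives the one-sided bound $U_{\rho_k}(\widehat{\btheta}_{\rho_k}^M) \le U_{\rho_k}(\widehat{\btheta}^M) = \overline{U}(\widehat{\btheta}^M) = U(\widehat{\btheta}^M)$, since $\widehat{\btheta}^M\in\bTheta$ so the penalty vanishes there. This shows the sequence $\{\widehat{\btheta}_{\rho_k}^M\}$ lies in a fixed superlevel set of $U$, hence is bounded; let $\btheta^\infty$ be any subsequential limit. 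Along that subsequence, the displayed inequality forces $\tfrac{\rho_k}{2}\dist(\widehat{\btheta}_{\rho_k}^M,\bTheta)^2 \le U(\widehat{\btheta}^M) - U(\widehat{\btheta}_{\rho_k}^M)$, and the right-hand side is bounded (again by coercivity/boundedness), so $\dist(\widehat{\btheta}_{\rho_k}^M,\bTheta)^2 \le O(1/\rho_k) \to 0$; since $\bTheta$ is closed, $\btheta^\infty\in\bTheta$. Then by lower semicontinuity of $U$ and dropping the nonnegative penalty, $U(\btheta^\infty) \le \liminf U_{\rho_k}(\widehat{\btheta}_{\rho_k}^M) \le U(\widehat{\btheta}^M) = \overline{U}(\widehat{\btheta}^M)$, and since $\btheta^\infty\in\bTheta$ we get $\overline{U}(\btheta^\infty) = U(\btheta^\infty) \le \overline{U}(\widehat{\btheta}^M)$, so $\btheta^\infty$ is a minimizer of $\overline{U}$. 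By uniqueness of the constrained minimizer, $\btheta^\infty = \widehat{\btheta}^M$. Since every subsequence has a further subsequence converging to the same limit, the whole sequence converges: $\widehat{\btheta}_{\rho_k}^M \to \widehat{\btheta}^M$.

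The main obstacle is the existence/coercivity bookkeeping in the first step — one must be careful that "strictly log-concave and proper" really does deliver coercivity of $U$ (so that all argmins are attained and the sequence of relaxed minimizers is precompact), rather than just strict convexity, which alone would not rule out minimizing sequences drifting to infinity; I would state this as a small lemma or cite a standard fact about proper log-concave densities. A secondary technical point is that $\dist(\cdot,\bTheta)^2$ is $C^1$ but the argument above never differentiates it, so I would keep the proof variational (comparison of function values plus lsc) rather than via first-order conditions, which avoids needing to handle the boundary behavior of the gradient. Everything else — convexity of the penalty, closedness of $\bTheta$, the subsequence argument — is routine.
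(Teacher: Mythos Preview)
Your argument is correct and follows essentially the same route as the paper, which adapts the classical penalty-method convergence argument (Theorem~17.1 of Nocedal--Wright): compare $U_{\rho_k}$ at its own minimizer versus at $\widehat{\btheta}^M$, use the resulting inequality to force $\dist(\widehat{\btheta}_{\rho_k}^M,\bTheta)\to 0$, pass to a limit point in $\bTheta$, and conclude by uniqueness from strict log-concavity. You are in fact more careful than the paper on two points it glosses over---establishing coercivity so that limit points exist, and making the ``every subsequence has a further subsequence'' step explicit---so your write-up would strengthen rather than merely reproduce the published proof.
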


\begin{proof}
We closely follow the argument of \cite[Theorem~17.1]{wright1999numerical}.
Let $\btheta^*$ be a limit point of the sequence of MAP estimates $\left\{\widehat{\btheta}_{\rho_k}^M\right\}_{k\in\bbN}$ corresponding to the constraint-relaxed posterior distributions $\widetilde{\pi}_{\rho_k}(\btheta\mid\by)$, and let $\widehat{\btheta}^M$ be the MAP estimate corresponding to the constrained posterior distribution $\overline{\pi}(\btheta\mid\by)$. For the remainder of the proof, we drop the superscript $M$ to ease notation.

We start with the basic inequality $\log\widetilde{\pi}_{\rho_k}\left(\widehat{\btheta}\mid\by\right)\leq\log\widetilde{\pi}_{\rho_k}\left(\widehat{\btheta}_{\rho_k}\mid\by\right)$ and expand:
$$
\log\pi\left(\widehat{\btheta}\mid\by\right) - \frac{\rho}{2}\underbrace{\dist\left(\widehat{\btheta},\bTheta\right)^2}_{=0} +\log C_{\rho_k} \leq \log\pi\left(\widehat{\btheta}_{\rho_k}\mid\by\right) - \frac{\rho_k}{2}\dist\left(\widehat{\btheta}_{\rho_k},\bTheta)^2 \right) + \log C_{\rho_k},
$$
where $C_{\rho_k} = \left(\int_{\bbR^d} \widetilde{\pi}_{\rho_k}(\btheta\mid\by)\,d\btheta \right)^{-1}$ is the normalizing constant for $\widetilde{\pi}(\btheta\mid\by)$. Simplifying the above expression gives
\[
\log \pi\left(\widehat{\btheta}\mid\by \right) \leq \log\pi\left(\widehat{\btheta}_{\rho_k}\mid\by\right) - \frac{\rho_k}{2}\dist\left(\widehat{\btheta}_{\rho_k},\bTheta\right)^2.\tag{*}
\]
A bit of algebra shows that
$$
0 \leq \dist\left(\widehat{\btheta}_{\rho_k}\mid\by\right)^2 \leq \frac{2}{\rho_k}\left(\log\pi\left(\widehat{\btheta}_{\rho_k}\mid\by\right) - \log\pi\left(\widehat{\btheta}\mid\by \right) \right)
$$
Since $\rho_k\uparrow\infty$ as $k\to\infty$ and $\log\pi(\cdot\mid\by)$ is continuous, we have that
$$
\frac{2}{\rho_k}\left(\log\pi\left(\widehat{\btheta}_{\rho_k}\mid\by\right) - \log\pi\left(\widehat{\btheta}\mid\by \right) \right)\overset{k\to\infty}{\longrightarrow} 0,
$$
which implies that
$$
\dist(\btheta^*,\bTheta) = \lim_{k\to\infty}\dist\left(\widehat{\btheta}_{\rho_k},\bTheta\right) = 0.
$$
The equality in the above expression follows from the continuity of $\dist(\cdot,\bTheta)$ \citep[Proposition~2.7.1(a)]{lange2016mm}. Thus, $\btheta^*\in\bTheta$.

Next, since $\rho>0$ and $\dist(\cdot,\bTheta)\geq 0$, we have for all $k\in\bbN$,
$$
\log\pi\left(\btheta^*\mid\by \right) \geq \log\pi\left(\btheta^*\mid\by \right) - \frac{\rho_k}{2}\dist\left(\widehat{\btheta}_{\rho_k},\bTheta \right)^2,
$$
so we must have
\[
\log\pi\left(\btheta^*\mid\by \right) \geq \lim_{k\to\infty}\left(\log\pi\left(\btheta^*\mid\by \right) - \frac{\rho_k}{2}\dist\left(\widehat{\btheta}_{\rho_k},\bTheta \right)^2 \right) = \log\pi\left(\btheta^*\mid\by \right) - \lim_{k\to\infty}\frac{\rho_k}{2}\dist\left(\widehat{\btheta}_{\rho_k},\bTheta \right)^2\tag{**}
\]
Taking $k\to\infty$ on both sides of (*), we obtain
\[
\log\pi\left(\widehat{\btheta}\mid\by \right) \leq \lim_{k\to\infty}\left(\log\pi\left(\btheta^*\mid\by \right) - \frac{\rho_k}{2}\dist\left(\widehat{\btheta}_{\rho_k},\bTheta \right)^2 \right) = \log\pi\left(\btheta^*\mid\by \right) - \lim_{k\to\infty}\frac{\rho_k}{2}\dist\left(\widehat{\btheta}_{\rho_k},\bTheta \right)^2\tag{***}
\]
Combining (**) and (***), we obtain the inequality:
$$
\log\pi\left(\btheta^*\mid\by\right) \geq \log\pi\left(\widehat{\btheta}\mid\by\right).
$$
By the strict log-concavity of $\pi(\btheta\mid\by)$ and the convexity of $\bTheta$ (Assumption 3), we have that $\widehat{\btheta}$ is the unique global maximum of $\overline{\pi}(\btheta\mid\by)$. Therefore, $\btheta^* = \widehat{\btheta}$, and the conclusion follows.
\end{proof}

\begin{thm}
Let $\overline{\Pi}$ be the constrained posterior distribution with density $\overline{\pi}(\btheta\mid\by)$, and let $\{\widetilde{\Pi}_{\rho_k}\}_{k\in\bbN}$ be a sequence of constraint-relaxed posterior distributions with densities $\{\widetilde{\pi}_{\rho_k}(\btheta\mid\by)\}_{k\in\bbN}$, respectively, where $\rho_k\uparrow\infty$ as $k\to\infty$. Then  $\|\widetilde{\Pi}_{\rho_k} - \overline{\Pi}\|_{\mathsf{TV}} \to 0$ as $k\to \infty$. It follows that $\widetilde{\Pi}_{\rho_k}\overset{D}{\rightarrow}\overline{\Pi}$ as $k\to 0$.
\end{thm}

\begin{proof}
By Assumption 1, the constrained distribution $\overline{\Pi}$ and the constrained relaxed posterior distributions $\widetilde{\Pi}_{\rho_k}$, $k\in\bbN$, are absolutely continuous, so they have densities $\overline{\pi}(\btheta\mid\by)$ and $\widetilde{\pi}_{\rho_k}(\btheta\mid\by)$, respectively. Write these as
\begin{align*}
    \overline{\pi}(\btheta\mid\by) & = CL(\btheta\mid\by)\pi(\btheta)\1_{\btheta\in\bTheta} \\
    \widetilde{\pi}_{\rho_k}(\btheta\mid\by) & = C_{\rho_k}L(\btheta\mid\by)\pi(\btheta)\exp\left(-\frac{\rho_k}{2}\dist(\btheta,\bTheta)^2 \right),
\end{align*}
where $C$ and $C_{\rho_k}$ are normalizing constants. Since $\1_{\bTheta}\leq \exp\left(-\frac{\rho}{2}\dist(\btheta,\bTheta)^2\right)$ for all $\btheta\in\bbR^d$ (with equality on $\bTheta$), we have
\begin{align*}
    C_{\rho_k}^{-1} & = \int_{\bbR^d} L(\btheta\mid\by)\pi(\btheta)\exp\left(-\frac{\rho_k}{2}\dist(\btheta,\bTheta)^2\right)d\,\btheta \\
    & \geq \int_{\bTheta} L(\btheta\mid\by)\pi(\btheta)\exp\left(-\frac{\rho_k}{2}\dist(\btheta,\bTheta)^2\right)d\,\btheta \\
    & = \int_{\bTheta} L(\btheta\mid\by)\pi(\btheta)\1_{\btheta\in\bTheta}d\,\btheta \\
    & = C^{-1}
\end{align*}
Thus, $C_{\rho_k}\leq C$ for all $k\in\bbN$. By a similar calculation, we have
\begin{align*}
    C_{\rho_{k+1}}^{-1} & = \int_{\bbR^d} L(\btheta\mid\by)\pi(\btheta)\exp\left(-\frac{\rho_{k+1}}{2}\dist(\btheta,\bTheta)^2\right)d\,\btheta \\
    & \geq \int_{\bbR^d} L(\btheta\mid\by)\pi(\btheta)\exp\left(-\frac{\rho_{k}}{2}\dist(\btheta,\bTheta)^2\right)d\,\btheta \\
    & = C_{\rho_k}^{-1}.
\end{align*}
We then have $C_{\rho_{k+1}}\leq C_{\rho_k}\leq C$ for all $k\in\bbN$. Therefore, $C_{\rho_k}\uparrow C$. Partition $\bbR^d = \bTheta \cup \bTheta^C$, and observe that for $\btheta\in\bTheta^C$,
$$
\widetilde{\pi}_{\rho_k}(\btheta\mid\by) \geq 0 = \overline{\pi}(\btheta\mid\by).
$$
For $\btheta\in\bTheta$, for all $k\in\bbN$,
$$
\overline{\pi}(\btheta\mid\by) = CL(\btheta\mid\by)\pi(\btheta) \geq C_{\rho_k}L(\btheta\mid\by)\pi(\btheta) = \widetilde{\pi}_{\rho_k}(\btheta\mid\by)
$$
Thus, we have
$$
\bTheta^C = \{\btheta: \widetilde{\pi}_{\rho_k}(\btheta\mid\by) \geq \pi(\btheta\mid\by)\},\qquad \bTheta = \{\btheta: \widetilde{\pi}_{\rho_k}(\btheta\mid\by) \leq \pi(\btheta\mid\by)\},\quad k\in\bbN.
$$
Using this fact about the partition, we examine the TV distance:
\begin{align*}
    \|\widetilde{\Pi}_{\rho_k} - \overline{\Pi}\|_{\mathsf{TV}} & = \frac{1}{2}\int_{\bbR^d} |\widetilde{\pi}_{\rho_k}(\btheta\mid\by) - \overline{\pi}(\btheta\mid\by)|\,d\btheta \\
    & = \frac{1}{2}\int_{\bTheta^C} \widetilde{\pi}_{\rho_k}(\btheta\mid\by) - \overline{\pi}(\btheta\mid\by)\,d\btheta + \frac{1}{2}\int_{\bTheta} \overline{\pi}(\btheta\mid\by) - \widetilde{\pi}_{\rho_k}(\btheta\mid\by)\,d\btheta \\
    & = \underbrace{\frac{1}{2}\int_{\bTheta^C} \widetilde{\pi}_{\rho_k}(\btheta\mid\by)\,d\btheta}_{(a)} + \underbrace{\frac{1}{2}(C - C_{\rho_k})\int_{\bTheta} L(\btheta\mid\by)\pi(\btheta)\,d\btheta}_{(b)}
\end{align*}

For the first integral (a), observe that
$$
\lim_{k\to\infty}\exp\left(-\frac{\rho_k}{2}\dist(\btheta,\bTheta)^2\right) = \1_{\btheta\in\bTheta}
$$
in a pointwise manner for all $\btheta\in\bbR^d$, so for all $\btheta\in\bbR^d$, $\widetilde{\pi}_{\rho_k}(\btheta\mid\by)\downarrow 0$ as $k\to\infty$. Note that
$$
\widetilde{\pi}_{\rho_k}(\btheta\mid\by)  = C_{\rho_k}L(\btheta\mid\by)\pi(\btheta)\exp\left(-\frac{\rho_k}{2}\dist(\btheta,\bTheta)^2\right) \leq CL(\btheta\mid\by)\pi(\btheta) \in L^1
$$
by Assumption 2. Therefore, by the Monotone Convergence Theorem, we have
$$
\lim_{k\to\infty}\frac{1}{2}\int_{\bTheta^C} \widetilde{\pi}_{\rho_k}(\btheta\mid\by)\,d\btheta = \frac{1}{2}\int_{\bTheta^C} \lim_{k\to\infty}\widetilde{\pi}_{\rho_k}(\btheta\mid\by)\,d\btheta = 0.
$$

For the second integral (b), we have:
\begin{align*}
    \lim_{k\to\infty} \frac{1}{2}(C - C_{\rho_k})\int_{\bTheta} L(\btheta\mid\by)\pi(\btheta)\,d\btheta = 0
\end{align*}

Putting this all together, we obtain
$$
\|\widetilde{\Pi}_{\rho_k} - \overline{\Pi}\|_{\mathsf{TV}} \to 0\quad\text{as}\quad k\to\infty.
$$

This shows that $\widetilde{\Pi}_{\rho_k}$ converges to $\Pi$ in total variation distance. An equivalent way of writing this is to say that
$$
\lim_{k\to\infty}\sup_{\mathcal{S}\in\mathcal{B}(\bbR^d)} |\widetilde{\Pi}_{\rho_k}(\mathcal{S}) - \overline{\Pi}(\mathcal{S})| = 0,
$$
where we take the supremum over all Borel sets $\mathcal{B}(\bbR^d)$. Let $\mathcal{A}\subset\mathcal{B}(\bbR^n)$ be the collection of $\overline{\Pi}$-continuity sets (i.e., for all $A\in\mathcal{A}$, $\overline{\Pi}(\partial A)=0$). Clearly,
$$
\lim_{k\to\infty}\sup_{A\in\mathcal{A}} |\widetilde{\Pi}_{\rho_k}(A) - \overline{\Pi}(A)| \leq \lim_{k\to\infty}\sup_{A\in\mathcal{B}(\bbR^d)} |\widetilde{\Pi}_{\rho_k}(A) - \overline{\Pi}(A)| = 0.
$$
Then by the Portmanteau Theorem \citep[Theorem~2.1]{billingsley2013convergence}, we conclude that $\widetilde{\Pi}_{\rho_k} \overset{D}{\rightarrow} \overline{\Pi}$.

\end{proof}

\begin{thm}
Suppose that $\bbE_{\btheta\sim\pi(\btheta\mid\by)}[\dist(\btheta,\bTheta)^2/2] > D$. Then the constraint-relaxed posterior distribution $\widetilde{\pi}(\btheta\mid\by)$  is the solution to the moment-constrained information projection problem:
$$
p^*(\btheta) \propto \pi(\btheta\mid\by)\exp\left(-\frac{\lambda}{2}\dist(\btheta,\bTheta)^2 \right),
$$
where $\lambda>0$ is a Lagrange multiplier that satisfies the moment constraint under $p^*(\btheta)$.
\end{thm}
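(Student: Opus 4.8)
The plan is to recognize the optimization problem as a classical Csisz\'ar $I$-projection — minimizing relative entropy subject to a moment constraint — and to attack it by Lagrangian duality, then to verify optimality directly via a Pythagorean-type identity and, finally, to establish existence and uniqueness of the multiplier. First I would form the Lagrangian
$$
\mathcal{L}(p,\lambda,\mu) = \int p(\btheta)\log\frac{p(\btheta)}{\pi(\btheta\mid\by)}\,d\btheta + \lambda\Big(\int p(\btheta)\tfrac12\dist(\btheta,\bTheta)^2\,d\btheta - D\Big) + \mu\Big(\int p(\btheta)\,d\btheta - 1\Big),
$$
take the Gateaux derivative in $p$, and set it to zero to obtain $\log\frac{p(\btheta)}{\pi(\btheta\mid\by)} + 1 + \tfrac{\lambda}{2}\dist(\btheta,\bTheta)^2 + \mu = 0$, i.e. $p^*(\btheta)\propto \pi(\btheta\mid\by)\exp(-\tfrac{\lambda}{2}\dist(\btheta,\bTheta)^2)$, which is exactly the constraint-relaxed posterior with $\rho=\lambda$. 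Since $p\mapsto \mathrm{KL}(p\,\|\,\pi(\cdot\mid\by))$ is strictly convex and the moment and normalization constraints are affine in $p$, this stationary point is the global minimizer as soon as $\lambda>0$ can be chosen to make $p^*$ feasible; properness of $p^*$ for every $\lambda\ge 0$ is Proposition 1.

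Next I would confirm optimality without appealing to abstract convex-duality theorems, via a direct argument. For any density $p$ with $\mathrm{KL}(p\,\|\,\pi(\cdot\mid\by))<\infty$ (otherwise $p$ is trivially not optimal, since $\mathrm{KL}(p^*\,\|\,\pi(\cdot\mid\by))<\infty$) and satisfying $\int p\,\tfrac12\dist(\btheta,\bTheta)^2 = D$, write $\log\frac{p^*(\btheta)}{\pi(\btheta\mid\by)} = -\tfrac{\lambda}{2}\dist(\btheta,\bTheta)^2 - \log Z_\lambda$, where $Z_\lambda = \int \pi(\btheta\mid\by)\exp(-\tfrac{\lambda}{2}\dist(\btheta,\bTheta)^2)\,d\btheta$. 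Because $p^*$ is strictly positive wherever $\pi(\cdot\mid\by)$ is, we have $p\ll p^*$, and
$$
\mathrm{KL}(p\,\|\,\pi(\cdot\mid\by)) - \mathrm{KL}(p\,\|\,p^*) = \int p(\btheta)\log\frac{p^*(\btheta)}{\pi(\btheta\mid\by)}\,d\btheta = -\lambda D - \log Z_\lambda,
$$
a constant independent of $p$. Hence $\mathrm{KL}(p\,\|\,\pi(\cdot\mid\by)) = \mathrm{KL}(p\,\|\,p^*) - \lambda D - \log Z_\lambda \ge -\lambda D - \log Z_\lambda$, with equality iff $p=p^*$; so once $\lambda$ is picked so that $p^*$ itself obeys the moment constraint, $p^*$ is the unique solution.

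It remains to produce such a $\lambda>0$, which I view as the crux of the proof. Define $m(\lambda) := \bbE_{\btheta\sim p^*_\lambda}[\tfrac12\dist(\btheta,\bTheta)^2]$. Using the elementary bound $s\,e^{-\lambda s}\le 1/(e\lambda)$ for $s\ge 0$ shows the numerator defining $m(\lambda)$ is finite for every $\lambda>0$, and dominated convergence gives continuity of $m$ on $(0,\infty)$; differentiating under the integral gives $m'(\lambda) = -\operatorname{Var}_{p^*_\lambda}(\tfrac12\dist(\btheta,\bTheta)^2) < 0$ (strict because $\dist(\cdot,\bTheta)^2$ is nonconstant on the support $\bbR^d$ of $\pi$ whenever $\bTheta\ne\bbR^d$), so $m$ is strictly decreasing. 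As $\lambda\downarrow 0$, monotone convergence gives $m(\lambda)\uparrow \bbE_{\btheta\sim\pi(\cdot\mid\by)}[\tfrac12\dist(\btheta,\bTheta)^2] > D$ by hypothesis (possibly $+\infty$, which only helps); as $\lambda\to\infty$, the tilted mass concentrates on $\{\dist(\cdot,\bTheta)\approx 0\}$ — which always carries positive $\pi$-mass since $\operatorname{supp}\pi = \bbR^d$ and $\bTheta\ne\varnothing$ — forcing $m(\lambda)\to 0$. Since $0<D<m(0^+)$, the intermediate value theorem yields a unique $\lambda>0$ with $m(\lambda)=D$, and then the argument of the previous paragraph identifies $p^*_\lambda = \widetilde{\pi}(\btheta\mid\by)$ (with $\rho=\lambda$) as the $I$-projection. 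The main obstacle is precisely this last step: justifying $m(\lambda)\to 0$ when $\bTheta$ is lower-dimensional so that $\pi(\bTheta)=0$ (e.g. a sphere), and tracking exactly which regularity of $\pi$ and $\bTheta$ the continuity, monotonicity, and limit claims for $m$ require; the variational derivation and the Pythagorean identity are otherwise routine.
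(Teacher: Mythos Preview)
Your proof is correct and more self-contained than the paper's. The paper simply cites the general solution to moment-constrained $I$-projection (West, 2020) to obtain the exponential-tilting form, and then cites a monotonicity relation (Tallman and West, 2022) to conclude that the hypothesis forces the multiplier to have the claimed sign. You recover the same form via the Lagrangian but add two ingredients the paper omits: a direct Pythagorean decomposition $\mathrm{KL}(p\,\|\,\pi)=\mathrm{KL}(p\,\|\,p^*)-\lambda D-\log Z_\lambda$ over the feasible set, which certifies global optimality and uniqueness without any external convex-duality citation; and an explicit intermediate-value argument for the \emph{existence} of the multiplier, via $m(\lambda)$ continuous and strictly decreasing (since $m'(\lambda)=-\mathrm{Var}_{p^*_\lambda}\!\big[\tfrac12\dist(\btheta,\bTheta)^2\big]$), with $m(0^+)>D$ and $m(\infty)=0$. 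The paper infers only the sign of $\lambda$ from monotonicity and never verifies that some finite $\lambda$ actually achieves the moment constraint, so your treatment is in fact the more complete one. The concern you flag about $m(\lambda)\to 0$ when $\bTheta$ has Lebesgue measure zero is legitimate and is not addressed by the paper either; it can be closed by splitting the defining integrals at $\{\dist(\btheta,\bTheta)<\varepsilon\}$ and using that this open neighborhood carries positive $\pi$-mass for every $\varepsilon>0$ under Assumption~1.
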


\begin{proof}
The general form of the moment-constrained information projection problem can be stated as follows:
\begin{align*}
    \min_p\; & \text{KL}(p\,\|\, q) := \int p\log\left(\frac{p}{q} \right) \\
    \st & \bbE_p[g(X)] = g_0,
\end{align*}
where $p$ and $q$ are density functions, $X\in\bbR^m$ is a random vector, $g:\bbR^m\to\bbR^d$ is a function of $X$, and $g_0\in\bbR^d$ is some given vector. The optimal solution is given by \cite{west2020perspectives} to be
$$
p^*(x)\propto q(x)\exp\left(\lambda^\intercal g(x) \right).
$$
See also \cite{csiszar1975divergence} and \cite{ robertson2005forecasting}. The optimal form of the solution, which is known as exponential tilting, and one can solve for the vector $\lambda$ using the moment constraint:
$$
\bbE_{p^*}[g(X)] = \int_{\bbR^m} g(x)q(x)\exp\left(\lambda^\intercal g(x) \right)\,dx = 0.
$$
Specializing to our case, we take $x=\btheta$, $q(\btheta) = \pi(\btheta\mid\by)$, $g(\btheta) = \frac{1}{2}\dist(\btheta,\bTheta)^2$, and $g_0 = D$. This gives
$$
p^*(\btheta)\propto \pi(\btheta\mid\by)\exp\left(\frac{\lambda}{2}\dist(\btheta,\bTheta)^2\right),
$$
where $\lambda\in\bbR$. Moreover, note that by \cite[Equation~5]{tallman2022entropic} and the remark immediately proceeding it, we have the relationship
$$
\frac{\partial}{\partial\lambda}\bbE_{p^*}\left[\frac{1}{2}\dist(\btheta,\bTheta)^2\right] > 0.
$$
Moreover, observe that $\lambda = 0$ if and only if $p^*(\btheta) = \pi(\btheta\mid\by)$. Thus, if we assume $\bbE_{\pi(\btheta\mid\by)}[\dist(\btheta,\bTheta)^2/2]>D$, then $\lambda<0$. Reparameterizing, we have
$$
p^*(\btheta)\propto \pi(\btheta\mid\by)\exp\left(-\frac{\lambda}{2}\dist(\btheta,\bTheta)^2\right),
$$
where $\lambda>0$, and this gives the desired result.
\end{proof}

\begin{prop}
The log constraint-relaxed posterior $\log \widetilde{\pi}(\btheta\mid \by)$ is continuously differentiable as long as the  log-posterior $\log\pi(\btheta\mid\by)$ is continuously differentiable in $\btheta$.
\end{prop}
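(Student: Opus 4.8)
The plan is to split $\log\widetilde{\pi}(\btheta\mid\by)$ into a piece that is $C^1$ by hypothesis and a piece governed only by the geometry of $\bTheta$. Writing $\widetilde{\pi}(\btheta\mid\by)=C_\rho\, L(\btheta\mid\by)\pi(\btheta)\exp\!\big(-\tfrac{\rho}{2}\dist(\btheta,\bTheta)^2\big)$ with $C_\rho$ the (finite, by Proposition~\ref{proper}) normalizing constant,
$$
\log\widetilde{\pi}(\btheta\mid\by)=\log\pi(\btheta\mid\by)-\frac{\rho}{2}\dist(\btheta,\bTheta)^2+\log C_\rho .
$$
The first term is continuously differentiable by assumption and the last is constant, so it suffices to show that $\phi(\btheta):=\tfrac12\dist(\btheta,\bTheta)^2$ is continuously differentiable on $\bbR^d$; in fact I would prove the sharper statement $\nabla\phi(\btheta)=\btheta-P_{\bTheta}(\btheta)$, which is exactly the gradient formula quoted after the proposition.

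First I would record that, since $\bTheta$ is nonempty, closed and convex (Assumption~\ref{assump3}), the projection $P_{\bTheta}$ is defined and single-valued on all of $\bbR^d$ and satisfies the obtuse-angle inequality $\langle \by-P_{\bTheta}(\btheta),\,\btheta-P_{\bTheta}(\btheta)\rangle\le 0$ for every $\by\in\bTheta$. Second, I would use this to establish differentiability of $\phi$ with $\nabla\phi(\btheta)=\btheta-P_{\bTheta}(\btheta)$: this is the standard fact that the squared Euclidean distance to a closed convex set is the Moreau--Yosida regularization of the indicator $\1_{\bTheta}$ and is therefore $C^1$ with the stated gradient (see \citet[Prop.~2.7.1]{lange2016mm} and \citealp{parikh2014proximal}); alternatively one bounds the first-order remainder $\phi(\btheta+\mathbf{h})-\phi(\btheta)-\langle \btheta-P_{\bTheta}(\btheta),\mathbf{h}\rangle$ directly by applying the obtuse-angle inequality at $\btheta$ and at $\btheta+\mathbf{h}$ together with the nonexpansiveness of $P_{\bTheta}$. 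Third, I would deduce continuity of $\nabla\phi$ from the fact that, for convex $\bTheta$, the projection is nonexpansive, $\|P_{\bTheta}(\btheta)-P_{\bTheta}(\btheta')\|_2\le\|\btheta-\btheta'\|_2$, so $\btheta\mapsto\btheta-P_{\bTheta}(\btheta)$ is globally Lipschitz and in particular continuous. Combining, $\phi\in C^1(\bbR^d)$ and $\log\widetilde{\pi}(\cdot\mid\by)$ is a sum of $C^1$ functions. As remarked in the text, $\nabla\phi$ then vanishes precisely on $\bTheta$ and the interior and exterior regimes patch together continuously across $\partial\bTheta$, unlike the unsquared penalty.

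The step I expect to be the main obstacle is the second one --- verifying differentiability of $\phi$ with $\nabla\phi(\btheta)=\btheta-P_{\bTheta}(\btheta)$ --- because it is the only place convexity is doing essential work and the only place requiring a genuine estimate rather than bookkeeping; the decomposition and the continuity of the gradient are then immediate. It is also precisely this step that would need to be localized to the full-measure set on which $P_{\bTheta}$ is single-valued if one wished to weaken convexity of $\bTheta$ to mere closedness, as in the Remark following the proposition.
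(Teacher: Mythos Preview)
Your proposal is correct and follows essentially the same route as the paper: reduce to showing $\phi(\btheta)=\tfrac12\dist(\btheta,\bTheta)^2$ is $C^1$, invoke the gradient formula $\nabla\phi(\btheta)=\btheta-P_{\bTheta}(\btheta)$ (via convexity/single-valuedness, citing \citet{lange2016mm}), and then obtain continuity of the gradient from nonexpansiveness of $P_{\bTheta}$. The only addition in the paper is that it also records, for generality, a Chebyshev-set characterization of projection continuity due to Wulbert, but the core argument coincides with yours.
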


\begin{proof}
By Assumption 3, $\bTheta$ is convex, so $P_{\bTheta}(\btheta)$ is single-valued. We then have $\nabla_{\btheta}\left[\frac{1}{2}\dist(\btheta,\bTheta)^2 \right] = \btheta - P_{\bTheta}(\btheta)$ \citep{lange2016mm} for any $\btheta$.\\

It remains to show that the gradient is continuous. It suffices to show that the projection operator $P_{\bTheta}(\btheta)$ is continuous. One way to see this is to observe that $P_{\btheta}(\btheta)$ is firmly nonexpansive (i.e., 1-Lipschitz) when $\bTheta$ is closed and convex, and we immediately draw the conclusion.\\

However, we record here a slightly more general result that is useful for more general settings. For this we define the term Chebyshev set \citep{wulbert1968continuity}; a set $C$ is Chebyshev if for all $x\in C$, $P_C(x)$ is a singleton. Theorem 3 of \cite{wulbert1968continuity} provides one characterization of continuity which occurs for Chebyshev sets: if $C\subset X$ is a locally compact, Chebyshev set in a Banach space, then $P_C$ is continuous if and only if $C$ is convex.

$\bTheta\subset\bbR^n$ is convex, and so it is Chebyshev, and $\bbR^n$ is a Banach space. Moreover, it is easy to check that $\bbR^n$ is locally compact and Hausdorff. Since $\bTheta$ is closed, Corollary 29.3 of \cite{munkres2000topology} gives us that $\bTheta$ is locally compact. Thus, by the above theorem, we can conclude that $P_{\bTheta}$ is continuous.\\

Thus, for any point $\btheta_0\subset\bTheta$, 
$$
\lim_{\btheta\to\btheta_0} \nabla_{\btheta}\left[\frac{1}{2}\dist(\btheta,\bTheta)^2 \right] = \lim_{\btheta\to\btheta_0}\; \left[\btheta - P_{\bTheta}(\btheta)\right] = \btheta_0 - P_{\bTheta}(\btheta_0)
$$
We also have that $\dist(\btheta,\bTheta)$ is identically $0$ on $\bTheta$, so it's gradient there will be $0$. In particular, if $\btheta_0\in\partial\bTheta\subset\bTheta$, then 
$$
\lim_{\btheta\to\btheta_0} \nabla_{\btheta}\left[\frac{1}{2}\dist(\btheta,\bTheta)^2 \right] = \btheta_0 - P_{\bTheta}(\btheta_0) = \btheta_0 - \btheta_0 = 0 = \left.\nabla_{\btheta}\left[\frac{1}{2}\dist(\btheta,\bTheta)^2 \right] \right|_{\btheta = \btheta_0\in\partial\bTheta}
$$

\end{proof}

\subsection*{HMC Implementation, Stochastic Ordering Case Study}

We outline the details of the HMC sampler used in the contingency table application. Suppose the prior distribution is
$$
\theta_{[i]} \overset{\indep}{\sim} \mathsf{Dir}(\alpha_{[i]}),\quad i\in[I],
$$
where the entries of $\alpha_{[i]}$ are positive. By a small abuse of notation, let us also denote matrix of $\theta_{ij}$ by $\theta$. Also, let $\Delta$ denote the $(J+1)$-simplex. Then the posterior distribution with prior is given by

\begin{align*}
\pi(\theta\mid n_{[1]},\ldots,n_{[I]}) & \propto \left(\prod_{i=1}^I\prod_{j=1}^J \theta_{ij}^{n_{ij}}\right)\left(\prod_{i=1}^I\prod_{j=1}^J \theta_{ij}^{\alpha_{ij}-1}\right)\exp\left(-\frac{\rho}{2}\dist(\theta, \Theta_{CT}) \right)\1_{\bigcap_{i=1}^I\{\theta_{[i]}\in\Delta \} } \\
& \propto \left(\prod_{i=1}^I\prod_{j=1}^J\theta_{ij}^{n_{ij}+\alpha_{ij}-1} \right)\exp\left(-\frac{\rho}{2}\|\theta - P_{\Theta_{CT}}(\theta)\|_F^2\right) \1_{\bigcap_{i=1}^I\{\theta_{[i]}\in\Delta \} }
\end{align*}
As mentioned above, given some $\theta$, $P_{\Theta_{CT}}(\theta)$ can be computed using the \texttt{quadprog} library in \texttt{R}. For the HMC algorithm, the (negative) potential energy is defined by the log-posterior, which is given by
$$
\log \pi(\theta\mid n_{[1]},\ldots,n_{[I]}) = C + \sum_{i=1}^I\sum_{j=1}^J (n_{ij} + \alpha_{ij} - 1)\log\theta_{ij} - \frac{\rho}{2}\|\theta - P_{\Theta_{CT}}(\theta)\|_F^2 + \log\left(\1_{\bigcap_{i=1}^I\{\theta_{[i]}\in\Delta \}}\right),
$$
where $C$ is a constant that does not depend on $\theta$. Additionally, we need to compute the gradient of the (negative) log-potential with respect to $\theta$. To resolve the fact that the (negative) log-potential, we set $\theta_{iJ} = 1 - \sum_{j=1}^{J-1} \theta_{ij}$ for $i\in[I]$. Let us denote the matrix $\theta$ with the $J^\text{th}$ column removed by $\widetilde{\theta}$ and the corresponding posterior by $\widetilde{\pi}$. Denote the space of $I\times(J-1)$ matrices $\widetilde{\theta}$ that satisfy the desired isotonic constraint by
$$
\widetilde{\Theta}_{CT} := \left\{(\theta_{ij})_{i\in [I], j\in [J-1]}\Bigg|\sum_{k=1}^j \theta_{i+1,k} \geq \sum_{k=1}^j \theta_{ik}  \text{ for } i\in[I], j\in[J-1]\right\}
$$
Note that the entries of elements of $\widetilde{\Theta}_{CT}$ are non-negative. However, the rows do not sum to less than 1. Then the reduced (negative) log-potential becomes:
\begin{align*}
\log \widetilde{\pi}(\widetilde{\theta}\mid n_{[1]},\ldots,n_{[I]}) & = C + \underbrace{\sum_{i=1}^I\left((n_{iJ} + \alpha_{iJ} - 1)\log\left(1 - \sum_{j=1}^{J-1} \theta_{ij} \right) + \sum_{j=1}^{J-1} (n_{ij} + \alpha_{ij}-1)\log\theta_{ij} \right)}_{=:f(\widetilde{\theta})} \\
& \qquad -\frac{\rho}{2}\|\widetilde{\theta} - P_{\widetilde{\Theta}_{CT}}(\widetilde{\theta})\|_F^2
\end{align*}
We compute the gradient of the first term component-wise and obtain
$$
\frac{\partial f(\widetilde{\theta})}{\partial\theta_{ij}} = \frac{n_{ij} + \alpha_{ij} - 1}{\theta_{ij}} - \frac{n_{iJ} + \alpha_{iJ} - 1}{1 - \sum_{j=1}^{J-1}\theta_{ij}}
$$
To compute the gradient of the second (penalty) term, we note that the projection may be trickier because of the above substitution for $\theta_{iJ}$. However, we show this is not the case. Define $\widehat{\Theta}_{CT}$ to be the canonical embedding of $\widetilde{\Theta}_{CT}$ into the space of $I\times J$ matrices where the isotonic constraint only holds for the first $J-1$ columns in each row; more formally, $\widetilde{\Theta}_{CT}\hookrightarrow\widehat{\Theta}_{CT}$, and
$$
\widehat{\Theta}_{CT} := \left\{(\theta_{ij})_{i\in [I], j\in [J]}\Bigg|\sum_{k=1}^j \theta_{i+1,k} \geq \sum_{k=1}^j \theta_{ik}  \text{ for } i\in[I], j\in[J-1]\right\}
$$
Note that since each row of $\theta$ is a probability distribution, we have $\sum_{k=1}^J \theta_{ik} = 1$ holds for all $i\in[I]$. Thus, the following condition holds trivially for all $i\in[I]$
$$
\sum_{k=1}^J \theta_{i+1,k} \geq \sum_{k=1}^J \theta_{ik}.
$$
Thus, we immediately obtain $\widehat{\Theta}_{CT} = \Theta_{CT}$. As a corollary, projecting onto $\widehat{\Theta}_{CT}$ is equivalent to simply projecting onto the original parameter space $\Theta_{CT}$ because the last column of $\theta$, which we removed with our substitution, does not have an impact on the projection. This means we can project onto $\widetilde{\Theta}_{CT}$ by projecting onto $\Theta_{CT}$ and ignoring the last column of elements. Finally, we ensure that each row of $\theta$ sums to 1 manually once we have determined a proposal for the first $J-1$ columns as described above.The gradient of the second term (the penalty term) is obtained then as:
$$
\nabla_\theta\left(\frac{\rho}{2}\dist(\widetilde{\theta},\widetilde{\Theta}_{CT} \right) = \nabla_\theta\left(\frac{\rho}{2}\dist(\theta,\Theta_{CT})^2 \right) = \rho(\theta - P_{\Theta_{CT}}(\theta)),
$$
where we ignore the $J^\text{th}$ column of $\theta$. We can use the potential function and the gradient to obtain proposals $\theta$; we ensure that each row of $\theta$ sums to 1 manually once we have determined a proposal for the first $J-1$ columns using the procedure described above.\\

Next, it is worth noting that the first term of the gradient has properties akin to the log-barrier function technique \citep{boyd2004convex}. In particular, note that
$$
\lim_{\theta_{ij}\to 0^+} \frac{\partial f(\widetilde{\theta})}{\partial\theta_{ij}} = +\infty\qquad\text{ and }\qquad \lim_{\theta_{ij}\to 1-} \frac{\partial f(\widetilde{\theta})}{\partial\theta_{ij}} = -\infty.
$$
Thus, our sampler will deviate away from the boundary of the simplex that supports the Dirichlet prior. However, due to the fact that in practice we use a discretized leapfrog integrator to solve the Hamilton PDEs along with the fact that the gradients can be wildly different for each $\theta_{ij}$ but we have a constant scalar $\epsilon$ step-size, it may be that the behavior of $f(\widetilde{\theta})$ may overshoot the simplex when updating the position and momentum in an HMC proposal. As a result, we address this using a rather simple approach by rejecting any proposals that fall outside of the simplex. We also terminate any position-momentum updates that by chance fall outside of the simplex since the log-posterior is not defined outside of the simplex. While these solutions may offer some recourse with implementing this sampler, more sophisticated techniques, such as Spherical HMC \cite{lan2014spherical} or through spherical transformation of the simplex \citep{betancourt2012cruising}, may result in an improved implementation. However, we believe our main purpose is to demonstrate how one may incorporate a distance-to-set prior rather than construct the best-performing sampler for this particular problem.

\end{document}